\documentclass[pdflatex,sn-mathphys-num]{sn-jnl}

\usepackage{graphicx}%
\usepackage{multirow}%
\usepackage{amsmath,amssymb,amsfonts}%
\usepackage{amsthm}%
\usepackage{mathrsfs}%
\usepackage[title]{appendix}%
\usepackage{xcolor}%
\usepackage{textcomp}%
\usepackage{manyfoot}%
\usepackage{booktabs}%
\usepackage{algorithm}%
\usepackage{algorithmicx}%
\usepackage{algpseudocode}%
\usepackage{listings}%

\usepackage{enumitem}
\usepackage{tikz}
\usetikzlibrary{positioning, shapes, arrows.meta, fit}
\usepackage{framed} 
\usepackage{mdframed}
\usetikzlibrary{decorations.pathreplacing}
\usetikzlibrary{decorations.markings}
\tikzstyle{vertex}=[circle, draw, inner sep=0pt, minimum size=3pt]
\newcommand{\vertex}{\node[vertex]}

\theoremstyle{thmstyleone}%
\newtheorem{thm}{Theorem}

\newtheorem{lem}[thm]{Lemma}%

\newtheorem{cor}[thm]{Corollary}
\newtheorem{claim}{Claim}[section]

\theoremstyle{thmstyletwo}%
\newtheorem{rem}{Remark}%

\theoremstyle{thmstylethree}%
\newtheorem{defn}{Definition}%
\theoremstyle{thmstylethree}
\newtheorem{redn}{Reduction Rule}
 
\newenvironment{claimproof}[1][Proof of claim]{%
  \begin{proof}[#1]%
}{%
  \end{proof}%
}

\begin{document}

\title[Article Title]{Hardness and Tractability of $\mathcal{T}_{h+1}$-\textnormal{\textsc{Free Edge Deletion}} \footnote[3]{A preliminary version has been published in the proceedings of the 25th International Symposium on Fundamentals of Computation Theory
(FCT) 2025.}}


\author[1]{\fnm{Ajinkya} \sur{Gaikwad}}\email{ajinkya.gaikwad@students.iiserpune.ac.in}
\equalcont{These authors contributed equally to this work.}

\author[1]{\fnm{Soumen} \sur{Maity}}\email{soumen@iiserpune.ac.in}
\equalcont{These authors contributed equally to this work.}

\author*[1]{\fnm{Leeja} \sur{R}}\email{leeja.r@students.iiserpune.ac.in}
\equalcont{These authors contributed equally to this work.}

\affil[1]{\orgdiv{Department of Mathematics}, \orgname{Indian Institute of Science Education and Research}, \orgaddress{\street{Dr. Homi Bhabha Road}, \city{Pune}, \postcode{411008}, \state{Maharashtra}, \country{India}}}


\abstract{
We study the parameterized complexity of the $\mathcal{T}_{h+1}$-\textnormal{\textsc{Free Edge Deletion}} problem.
Given a graph $G=(V,E)$ and integers $k$ and $h$, the task is to delete at most $k$ edges
so that every connected component of the resulting graph has size at most $h$.
The problem is NP-complete for every fixed $h \geq 3$, while it is solvable in polynomial
time for $h \leq 2$, making it a natural problem to study from a parameterized complexity perspective.

Recent work has revealed significant hardness barriers.
In particular, Bazgan, Nichterlein, and Alferez (JCSS~2025) showed that the problem is W[1]-hard when parameterized by the solution size together with the size of a
feedback edge set, ruling out fixed-parameter tractability for several classical
structural parameters.

We further strengthen the negative results by proving that the problem is W[1]-hard when parameterized by the vertex deletion distance to a disjoint union of paths, the vertex deletion distance to a disjoint union of stars, or the twin cover number of the input graph.
These results extend and unify earlier hardness results for treewidth, pathwidth, and feedback vertex set, and demonstrate that several restrictive structural parameters, including treedepth, cluster vertex deletion number, and modular width, are insufficient to obtain fixed-parameter tractability when $h$ is unbounded.

On the positive side, we show that the problem is fixed-parameter tractable under several
structural parameterizations, including the cluster vertex deletion number together with $h$,
neighborhood diversity together with $h$, and the vertex deletion distance to a clique.
Since the problem is W[1]-hard when parameterized by the solution size $k$ alone,
we also investigate approximation.
We present a fixed-parameter tractable bicriteria approximation algorithm parameterized
by $k$ that either correctly reports that no solution of size at most $k$ exists,
or returns a feasible solution of size at most $4k^{2}$. We also show hardness of the directed generalization of this problem on directed acyclic graphs.
Finally, we study the problem on restricted graph classes.
We show that the problem admits fixed-parameter tractable
algorithms parameterized by $k$ on split graphs and on interval graphs, both of which
form subclasses of chordal graphs.}

\keywords{
Parameterized complexity,
edge deletion problems,
fixed-parameter tractability,
W[1]-hardness
}



\maketitle

\section{Introduction}\label{sec1}

Edge modification problems form a central class of graph problems in algorithmic graph theory, asking whether a small number of edge additions or deletions can transform a given graph into one satisfying a prescribed structural property. Such problems arise naturally in applications ranging from network design~\cite{drange2016vertex,1e396962aa954ff5bbccabe84d44e71c} to computational biology~\cite{10.1007/978-3-642-20877-5_30}, and they have been extensively studied from both classical and parameterized complexity perspectives~\cite{CAI1996171,10.1007/978-3-540-77120-3_79,gaikwad2025parameterizedcomplexitysclubcluster,CaoMarx2016ChordalEditing,MATHIESON20103181,2013arXiv1306.3181C,misra_et_al:LIPIcs.ISAAC.2023.53,gaikwad2025parameterizedalgorithmseditinguniform}. For broader survey on edge modification problems and their algorithmic and complexity-theoretic aspects,
we refer the reader to~\cite{CRESPELLE2023100556}.

In this work, we study the $\mathcal{T}_{h+1}$-\textnormal{\textsc{Free Edge Deletion}} problem. Given an undirected graph $G=(V,E)$ and integers $k$ and $h$, the task is to determine whether at most $k$ edges can be deleted so that every connected component of the resulting graph has size at most $h$. Equivalently, the resulting graph must exclude every tree on $h+1$ vertices as a subgraph. This problem is motivated, for example, by applications in network epidemiology~\cite{bib1}, where bounding the size of connected components limits the worst-case spread of an epidemic through a contact network. Variants of the edge-modification problem considered here have appeared in the literature under different terminology, most notably as the \emph{component order edge connectivity} problem~\cite{Gross2013ComponentOrder} and the \emph{minimum worst contamination} problem~\cite{10.1007/978-3-642-20877-5_30}.

Despite its natural formulation, $\mathcal{T}_{h+1}$-\textnormal{\textsc{Free Edge Deletion}} is computationally hard. The problem is NP-complete for every fixed $h \ge 3$~\cite{bib1}. On the other hand, for $h=1$ and $h=2$ the problem can be solved in polynomial time. In the case $h=1$, every connected component must consist of a single vertex, and hence all edges of the input graph must be deleted. For $h=2$, the problem is equivalent to selecting a maximum matching, since the connected components of the resulting graph must have size at most two. A maximum matching can be computed in polynomial time. Consequently, there is a clear boundary between the values of $h$ for which the problem is solvable in polynomial time and those for which it becomes NP-complete.

From a parameterized viewpoint, the problem admits a kernel with $2kh$ vertices and $2kh^2 + k$ edges, and it is fixed-parameter tractable when parameterized by the vertex cover number of the input graph~\cite{bib2}. Furthermore, Enright and Meeks~\cite{bib1} showed that the problem can be solved in time $\mathcal{O}((h \cdot {tw})^{2tw} \cdot n)$ on graphs of treewidth at most $tw$ and $n$ vertices. However, with the exception of parameterizations based on vertex cover, all known positive results crucially rely on the parameter $h$ being bounded.
This dependence on $h$ raises the following fundamental question:
\emph{Are there structural parameterizations that admit fixed-parameter tractable algorithms for
$\mathcal{T}_{h+1}$-\textnormal{\textsc{Free Edge Deletion}} even when $h$ is unbounded?}

Enright and Meeks explicitly conjectured that $\mathcal{T}_{h+1}$-\textnormal{\textsc{Free Edge Deletion}} is W[1]-hard when parameterized by the treewidth of the input graph alone. This conjecture was later resolved by Gaikwad and Maity~\cite{bib2}, who proved that the problem is indeed W[1]-hard under this parameterization. Subsequently, Bazgan, Nichterlein, and Alferez~\cite{bib11} significantly strengthened this hardness result by showing that the problem remains W[1]-hard when parameterized by the solution size together with the size of a feedback edge set. As a consequence, $\mathcal{T}_{h+1}$-\textnormal{\textsc{Free Edge Deletion}} is W[1]-hard for several classical structural parameters, including feedback vertex set, pathwidth, and treewidth.

\subsection{Our Results}

In this paper, we advance the understanding of $\mathcal{T}_{h+1}$-\textnormal{\textsc{Free Edge Deletion}} for unbounded $h$ in two complementary directions.

First, we show that $\mathcal{T}_{h+1}$-\textnormal{\textsc{Free Edge Deletion}} is W[1]-hard when parameterized by the treedepth or the twin cover of the input graph. This result strictly strengthens the known hardness for pathwidth and shows that even very restrictive structural decompositions do not suffice to obtain fixed-parameter tractability when $h$ is unbounded. This also rules out FPT algorithm when parameterized by other parameters like cluster vertex deletion set and modular width as well. Refer to figure \ref{overview} for overview.

Second, we present positive results based on structural deletion parameters.
Although the problem is W[1]-hard when parameterized by the cluster vertex deletion number alone,
we show that it becomes fixed-parameter tractable when parameterized by the cluster vertex deletion
number $\ell$ together with $h$.
Assuming that a cluster vertex deletion set of size $\ell$ is given, we reduce the problem to
instances of bounded pathwidth and obtain an FPT algorithm whose running time depends only on
$\ell$ and $h$.
Our approach relies on a structural analysis of the components outside the deletion set and on a
reduction rule that bounds their size while preserving optimal solutions.
We further show that the problem is fixed-parameter tractable when parameterized by the vertex
deletion distance to a clique.
Finally, we consider the parameterization by neighborhood diversity.
While the parameterized complexity of
$\mathcal{T}_{h+1}$-\textnormal{\textsc{Free Edge Deletion}} with respect to neighborhood diversity
alone remains open, we show that the problem is fixed-parameter tractable when parameterized by
neighborhood diversity together with $h$. Our algorithm is obtained by reducing the problem to an integer linear program whose size depends only on these parameters.

Finally, we consider the parameterization by the solution size. It is known that
$\mathcal{T}_{h+1}$-\textnormal{\textsc{Free Edge Deletion}} is W[1]-hard when parameterized by the solution size alone. In light of this hardness, we turn to approximation and present a fixed-parameter tractable bicriteria approximation algorithm parameterized by the solution size $k$. Given an instance $(G,k,h)$, if $(G,k,h)$ is a yes-instance, then the algorithm returns
a feasible solution of size at most $4k^{2}$.
If $(G,k,h)$ is a no-instance, the algorithm may either correctly conclude that the instance
is a no-instance or return a feasible solution of size at most $4k^{2}$. The running time of the algorithm is fixed-parameter tractable with respect to $k$. 
We also consider a natural directed generalization of the problem.
As already noted by Enright and Meeks, many motivating applications—most notably
epidemiology and animal movement networks—are inherently directional, and bounding
undirected connected components does not adequately capture worst-case spread.
In the directed setting, a natural analogue of \textsc{$\mathcal{T}_{h+1}$-Free Edge Deletion}
asks whether at most $k$ arcs can be deleted so that, from every vertex, the number of
vertices reachable by directed paths is at most $h$.
We show that this directed variant is W[2]-hard when parameterized by $k$,
even when the input graph is a directed acyclic graph.

Finally, we identify restricted graph classes that are subclasses of chordal graphs, namely split graphs and interval graphs, and show that on these graph classes $\mathcal{T}_{h+1}$-\textnormal{\textsc{Free Edge Deletion}} admits fixed-parameter tractable algorithms when parameterized by the solution size $k$.

\begin{figure}[ht]
\centering
\scalebox{0.7}{
\begin{tikzpicture}

\begin{scope}[xshift=7.5cm]

\node[rectangle, draw, green!60!black, thick] (A2) at (0,-2) {vc};
\node[rectangle, draw, black] (B2) at (1.2,1.5) {nd};
\node[rectangle, draw, red] (C2) at (3.2,1.2){tc};
\node[rectangle, draw, red] (D2) at (1.5,3) {mw};
\node[rectangle, draw, red] (E2) at (0,4) {cw};
\node[rectangle, draw, red] (F2) at (-0.9,2) {pw};
\node[rectangle, draw, red] (G2) at (-1.8,1.5) {fvs};
\node[rectangle, draw, red] (H2) at (-1.2,3) {tw};
\node[rectangle, draw, red] (I2) at (3.2,3) {cvd};
\node[rectangle, draw, red] (J2) at (-0.5,1) {td};

\node[rectangle, draw, red] (K2) at (-2.2,0.2) {vdp}; 
\node[rectangle, draw, red] (L2) at (-0.7,-0.5) {vds}; 
\node[rectangle, draw, green!60!black, thick] (M2) at (0.7,-0.2) {vdc}; 

\draw[->] (A2) -- (C2);
\draw[->] (B2) -- (D2);
\draw[->] (C2) -- (D2);
\draw[->] (D2) -- (E2);
\draw[->] (C2) -- (I2);
\draw[->] (G2) -- (H2);
\draw[->] (F2) -- (H2);
\draw[->] (H2) -- (E2);
\draw[->] (I2) -- (E2);
\draw[->] (J2) -- (F2);

\draw[->] (A2) -- (K2); 
\draw[->] (A2) -- (M2); 
\draw[->] (M2) -- (B2); 
\draw[->] (M2) -- (I2); 
\draw[->] (K2) -- (G2); 
\draw[->] (L2) -- (J2); 
\draw[->] (L2) -- (G2); 

\end{scope}

\end{tikzpicture}
}
\caption{\footnotesize
Relationship between vertex cover [vc] (see Definition \ref{defvc}), neighborhood diversity [nd] (see Definition \ref{defnd}), twin cover [tc] (see Definition \ref{deftc}), modular width [mw] (see~\cite{defmodwidth}), cluster vertex deletion number [cvd] (see Definition \ref{cvd def}), feedback vertex set [fvs] (see Definition \ref{deffvs}), pathwidth [pw] (see Definition \ref{defpw}), treewidth [tw] (see Definition \ref{deftw}) and clique width [cw] (see~\cite{bib14}).  
We additionally include vertex deletion distance to a disjoint union of paths [vdp], vertex deletion distance to a clique[vdc] and
vertex deletion distance to a disjoint union of stars [vds].
Note that $A\rightarrow B$ means that there exists a function $f$ such that for 
all graphs,~$f(A(G))\geq B(G)$. This gives complexity landscape of $\mathcal{T}_{h+1}$-\textnormal{\textsc{Free Edge Deletion}}  under different parameterizations.
Red indicates W[1]-hardness, green indicates fixed-parameter tractability,
and black denotes parameterizations for which the complexity status remains open.}
\label{overview}
\end{figure}
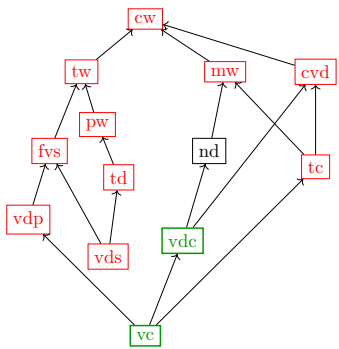

\section{Notation and Definitions}
Unless otherwise stated, all graphs are simple, undirected, and loopless. For a graph $G$, $V(G)$ is the vertex set of $G$ and $E(G)$ the edge set of $G$. We denote the sizes of the vertex and edge sets as $n=|V(G)|$ and $m=|E(G)|$. For vertices $u, v \in V(G)$, we denote the undirected edge between them as $uv$. For a vertex $v \in V(G)$ and a subset $A \subseteq V(G)$, $N_A(v)= \{u \in A : uv \in E(G)\}$ denote the set of neighbors of $v$ that are in $A$. Similarly, for subsets $A,B \subseteq V(G)$, $N_A(B)= \{u \in A : uv \in E(G) \text{ for some } v \in B\}$. We further define
$E(A,B) = \{uv \in E(G) : u \in A,\ v \in B\}$,
that is, the set of edges with one endpoint in $A$ and the other in $B$. Let $E' \subseteq E(G)$ be a set of edges in $G$. Then $G \setminus E'$ denotes the subgraph of $G$ obtained by deleting $E'$ from $G$. We follow standard graph-theoretic notation as in~\cite{13638}. 
We refer to~\cite{marekcygan,Downey} for details on parameterized complexity.

\begin{defn}\label{defvc}
A set $S \subseteq V(G)$ is a \emph{vertex cover} of $G$ if  every edge in $E(G)$ has at least one  endpoint in $S$. The \emph{size} of a smallest vertex cover of  $G$ is the \emph{vertex cover number} of $G$.
\end{defn}

\begin{defn}\label{deffvs}
 A \emph{feedback vertex set}  of a graph  $G$ is a set of vertices whose removal turns $G$ into a forest. The minimum size of a feedback vertex set in $G$ is the \emph{ feedback vertex set number} of $G$.
\end{defn}

\noindent A rooted forest is a disjoint union of rooted trees. Given a rooted forest $Y$, its \emph{closure} is a  graph $H$ where $V(H)=V(Y)$, and $E(H)$ contains an edge between two distinct vertices if and only if one is an ancestor of the other in $Y$.

\begin{defn}\label{deftd}~\cite{bib12}
         The \emph{ treedepth} of a graph $G$ is the minimum height of a rooted forest $Y$ whose closure contains the graph $G$ as a subgraph. It is denoted by $td(G)$.
\end{defn}

\noindent Ganian~\cite{bib13} introduced
a new parameter called twin-cover and showed that it is capable of solving
a wide range of hard problems.

\begin{defn}~\cite{bib13}
An edge $uv\in E(G)$ is a \emph{twin edge} of $G$ if $N[u]=N[v]$. 
\end{defn}

\begin{defn}~\cite{bib13}\label{deftc}
  A set $X\subseteq V(G)$ is a \emph{twin-cover} of $G$ if every edge in~$G$ is either twin edge or incident to a vertex in $X$. The \emph{twin-cover number} of $G$, denoted as $tc(G)$, is the minimum possible size
of a twin-cover of $G$.
\end{defn}

Two distinct vertices $u, v$ are called \emph{true twins} if $N[u]=N[v]$ and \emph{false twins} if $N(u)=N(v)$. We say that $u$ and $v$ have the same \emph{neighborhood type} if they are either true or false twins; such vertices are simply called \emph{twins}.

\begin{defn}~\cite{Lampis}\label{defnd}
A graph  $G$
 has \emph{neighborhood diversity} at most $d$, if there exists a partition of $V(G)$
 into at most $d$
 sets (we call these sets {\it type classes}) such that all the vertices in each set have the same neighborhood type.
\end{defn}

We now review the concept of a tree decomposition introduced by Robertson and Seymour~\cite{Neil}.
Treewidth is a  measure of how ``tree-like'' the graph is.
\begin{defn}[Robertson and Seymour~\cite{Neil}]  A {\it tree decomposition} of a 
graph~$G=(V,E)$  is a tree $T$ together with a 
collection of subsets $X_t$ (called \emph{bags}) of $V$ labeled by the nodes $t$ of $T$ such that 
$\bigcup_{t\in T}X_t=V $ and (1) and (2) below hold:
\begin{enumerate}
			\item For every edge $uv \in E(G)$, there  is some $t$ such that $\{u,v\}\subseteq X_t$.
			\item  (Interpolation Property) If $t$ is a node on the unique path in $T$ from $t_1$ to $t_2$, then 
			$X_{t_1}\cap X_{t_2}\subseteq X_t$.
		\end{enumerate}
	\end{defn}

\begin{defn}\label{deftw}~\cite{Neil} The {\it width} of a tree decomposition is
the maximum value of $|X_t|-1 $ taken over all the nodes $t$ of the tree $T$ of the decomposition.
The \emph{treewidth} $tw(G)$ of a graph $G$  is the  minimum width among all possible tree decompositions of $G$.
\end{defn} 

\begin{defn}\label{defpw} 
    If the tree $T$ of a tree decomposition is a path, then we say that the tree decomposition 
    is a {\it path decomposition}. The \emph{pathwidth}  $pw(G)$ of a graph $G$  is the  minimum width among all possible path decompositions of $G$.
\end{defn}

\begin{defn}\label{cvd def}
   The \emph{cluster vertex deletion number} of a graph is the minimum number of its vertices whose deletion results in a disjoint union of complete graphs. 
\end{defn}

\section{Hardness Results}\label{sec4}
In this section, we establish the parameterized hardness of the problem. We present a parameterized reduction from the \textsc{Unary Bin Packing} problem.
    \begin{mdframed}[linewidth=1pt, roundcorner=5pt, frametitle={\textnormal{\textsc{Unary Bin Packing}}}, nobreak=true]
    \noindent\textbf{Input:} A set of $N$ items along with their integer sizes $a_1, a_2,\dots, a_N$ given in unary, and two positive integers $C$ and $t$.\\
    \noindent \textbf{ Question:} Is it possible to partition the $N$ items into $t$ disjoint bins of capacity $C$ such that the sum of the sizes of the items in each bin does not exceed the capacity?
\end{mdframed}

\begin{thm}\label{thm:distF}
Let $\mathcal{F}$ be any graph class that contains a connected graph on $s$ vertices
for every $s\in\mathbb{N}$.
Then \textsc{$\mathcal{T}_{h+1}$-Free Edge Deletion} is \emph{W[1]-hard} when parameterized by
the vertex deletion distance of the input graph to $\mathcal{F}$.
\end{thm}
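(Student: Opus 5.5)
We give a parameterized reduction from \textsc{Unary Bin Packing}, which is W[1]-hard when parameterized by the number of bins $t$ (Jansen et al.). The deletion set $X$ will contain $O(t)$ vertices, and every other component will be a single connected graph from $\mathcal{F}$ with one component per item. Given an instance $(a_1,\dots,a_N,C,t)$, we may assume $\sum_i a_i = tC$ by padding with unit items, so every bin is filled exactly. For each item $i$ we pick a connected graph $F_i\in\mathcal{F}$ on $s_i = M\cdot a_i$ vertices. Here $M$ is a large scaling factor, polynomial in the input size since sizes are unary. We add $t$ bin vertices $b_1,\dots,b_t$, which form $X$, and join every vertex of every $F_i$ to every $b_j$. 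We set $h = MC+1$ and choose the budget $k$ as follows.

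Assign each item to a bin $j$. We keep all edges from $F_i$ to $b_j$ and delete all edges from $F_i$ to the other $t-1$ bin vertices. The resulting components are exactly $\{b_j\}\cup\bigcup_{i\in \text{bin } j}V(F_i)$, of size $MC+1 = h$, and no edges inside any $F_i$ are deleted. This costs $k=(t-1)\sum_i s_i = (t-1)MtC$ deletions. So a yes-instance of bin packing yields a solution, and the forward direction is routine.

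For the converse, take a solution $S$ with $|S|\le k$. We want to show that essentially each $F_i$ stays whole and attaches to exactly one $b_j$. Two problems arise.
\begin{itemize}[leftmargin=*]
\item \emph{Splitting an $F_i$.} Since $F_i$ is an arbitrary connected graph, splitting it might be cheap, for example when $F_i$ is a path.
\item \emph{Detaching from all bins.} Parts of $F_i$ could be detached from every bin vertex. Such a vertex costs $t$ deletions rather than $t-1$.
\end{itemize}

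The counting argument runs as follows. Each vertex $v\in V(F_i)$ ends in a component containing at most one $b_j$, because the $b_j$ are pairwise in distinct components. Otherwise a component would contain two bin vertices and $2MC$ vertices from $F$-graphs, which is too big if we make each component with two bins exceed $h$. A cleaner route is to join the $b_j$ by a clique, or to note directly that $v$ then contributes at least $t-1$ deleted edges among its edges to $X$. Hence the deletions to $X$ alone already total at least $(t-1)\sum s_i = k$. Equality therefore forces the following:
\begin{itemize}[leftmargin=*]
\item every vertex keeps exactly one bin edge;
\item no edge inside any $F_i$ is deleted, so each $F_i$ stays connected and all of its vertices join the same bin;
\item no vertex is left binless, since that would cost $t$.
\end{itemize}
Component sizes are then $1+M\sum_{i\in\text{bin }j}a_i\le MC+1$, so the load of each bin is at most $C$. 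With this exact-budget argument, $M=1$ suffices.

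The main obstacle is the case where two bin vertices lie in the same component. Then a vertex could keep edges to two bins, spending only $t-2$ deletions, while bins are merged. We handle this by showing that each component contains at most one $b_j$. If $b_j$ and $b_{j'}$ share a component, the vertices adjacent to both, together with the other vertices of that component, must fit within $h$. Counting the total number of surviving $F$-vertex/bin edges then gives at most $\sum s_i$ plus the excess from merged components. Making $h$ tight via the padding assumption $\sum a_i=tC$ shows that the number of surviving edges is at most $\sum s_i$, with equality only in the intended structure. Concretely, the surviving edges to $X$ number at most $\sum_{\text{comp }Q}|Q\setminus X|\cdot|Q\cap X|$. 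Subject to $|Q|\le h$ and $\sum |Q\setminus X| = tC$ with $|X|=t$, this quantity is maximized, at value $tC$, exactly by components with one bin each. This follows by a simple exchange or convexity argument, and it is the step to verify carefully. Finally, $|X|=t$, so the parameter is bounded and hardness transfers.
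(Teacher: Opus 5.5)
Your reduction is correct, but it takes a different route from the paper's. The paper does not assume a tight instance. It attaches to each bin vertex $v_j$ an anchor $H_j\in\mathcal{F}$ on $h-C-1$ vertices, all adjacent to $v_j$, and uses generous slack $h=10C+k$. A vertex of $H_j$ can only be detached from $v_j$ by deleting its own edge to $v_j$, so a component containing two bin vertices has at least $2+2(h-C-1)-k=h+8C$ vertices. Bins are therefore separated by a purely local argument. After that, every item vertex loses at least $t-1$ bin edges, $|F|\ge a(t-1)=k$, and equality finishes the proof.

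You use no anchors. Instead you pad to $\sum_i a_i=tC$, set $h=MC+1$ tightly, and let one global count do both jobs. If a component $Q$ of $G\setminus S$ contains $x_Q\ge 1$ bin vertices, then $|Q\setminus X|\le h-x_Q\le MC$. Hence the surviving item--bin edges number at most $\sum_Q x_Q\,|Q\setminus X|\le MC\sum_Q x_Q=MCt=\sum_i s_i$. So at least $(t-1)\sum_i s_i=k$ item--bin edges are deleted. Equality then forces $x_Q=1$ and $|Q\setminus X|=MC$ for every such $Q$, with no deletion inside any $F_i$. This one line is the entire ``convexity step'' you left to verify.

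This count should replace the first attempt in your counting paragraph, which is circular: it presupposes that the $b_j$ lie in distinct components. Joining the $b_j$ by a clique does not by itself separate them either.

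Your route buys a smaller construction: $G\setminus X$ consists only of item graphs, with $M=1$ and $h=C+1$. The cost is a genuine dependence on padding. Without $\sum_i a_i=tC$, a component with two bins and at most $C-1$ item vertices keeps two bin edges per item vertex, and the lower bound on deletions fails. For example, with $t=2$ and $\sum_i a_i\le C-1$, no deletions are needed. The paper's anchors work for arbitrary instances, at the price of large gadgets.

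Finally, you should state the normalizations that keep the padding polynomial. Reject if $\sum_i a_i>tC$, and assume $t\le N$ and $C\le\sum_i a_i$, since otherwise the instance is trivial.
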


\begin{proof}
We give a parameterized reduction from \textsc{Unary Bin Packing},
which is known to be W[1]-hard when parameterized by the number of bins~\cite{bib5}.
Let $I$ be an instance of \textsc{Unary Bin Packing} with $N$ items of sizes
$a_1,\dots,a_N$ (given in unary), $t$ bins, and bin capacity $C$.
Let $a := \sum_{i=1}^{N} a_i$.
We construct an instance $I'=(G,k,h)$ of
\textsc{$\mathcal{T}_{h+1}$-Free Edge Deletion} as follows
(see Figure~\ref{fig:tc}).

\begin{enumerate}[label=(\arabic*)]
    \item Create a vertex set $X=\{v_1,v_2,\dots,v_t\}$.

    \item For each item $i\in[N]$, create a connected graph $A_i\in\mathcal{F}$
          on exactly $a_i$ vertices.
          Let $A := \bigcup_{i=1}^{N} A_i$.

    \item Add all edges between $A$ and $X$.

    \item Set $k := a(t-1)$ and $h := 10C + k$.

    \item For each $j\in[t]$, create a connected graph $H_j\in\mathcal{F}$ on $ h' := h - C - 1$ vertices, and add all edges between $H_j$ and $v_j$.
\end{enumerate}

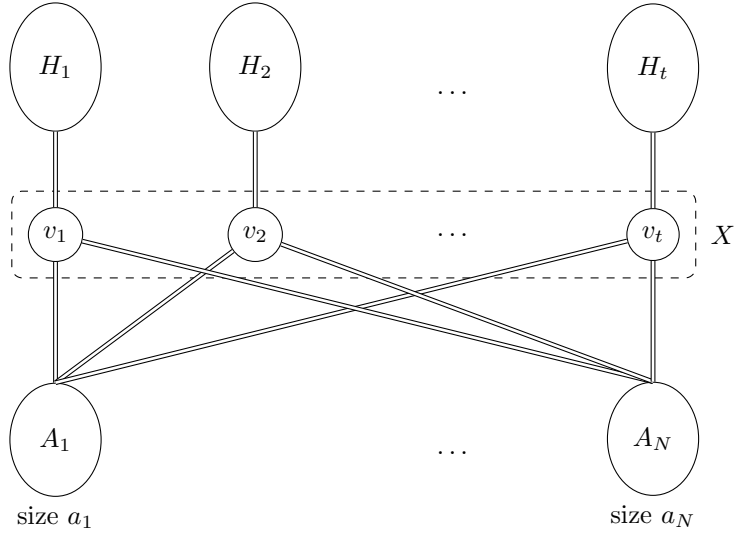
\begin{figure}[H]
    \centering

   \begin{tikzpicture}[
    node distance=1.6cm and 1.9cm,
    v_node/.style={draw, circle, minimum width=0.5cm},
    h_node/.style={draw, ellipse, minimum width=1.2cm, minimum height=1.7cm},
    a_node/.style={draw, ellipse, minimum width=1.2cm, minimum height=1.5cm},
    bundle_edge/.style={double, double distance=1pt, -}
]

\node[v_node] (v1) {$v_1$};
\node[v_node] (v2) [right=of v1] {$v_2$};
\node (v_dots) [right=of v2, draw=none] {$\dots$};
\node[v_node] (vt) [right=of v_dots] {$v_t$};

\node[draw, dashed, rounded corners, fit=(v1) (vt), inner sep=6pt] (X_box) {}; 
\node at (X_box.east) [ xshift=10pt] {\textbf{$X$}};

\node[h_node] (H1) [above= 1cm of v1] {$H_1$};
\node[h_node] (H2) [above=1cm of v2] {$H_2$};
\node (h_dots) [above=of v_dots, draw=none] {$\dots$};
\node[h_node] (Ht) [above=1cm of vt] {$H_t$};

\draw[bundle_edge] (H1) -- (v1);
\draw[bundle_edge] (H2) -- (v2);
\draw[bundle_edge] (Ht) -- (vt);

\node[a_node, label=below:{size $a_1$}] (a1) [below=of v1] {$A_1$};

\node (a_dots) [below=of v_dots, draw=none, yshift=-1cm] {$\dots$};

\node[a_node, label=below:{size $a_N$}] (an) [below=of vt] {$A_N$};

\draw[bundle_edge] (a1.north) -- (v1);
\draw[bundle_edge] (a1.north) -- (v2);
\draw[bundle_edge] (a1.north) -- (vt);

\draw[bundle_edge] (an.north) -- (v1);
\draw[bundle_edge] (an.north) -- (v2);
\draw[bundle_edge] (an.north) -- (vt);

\end{tikzpicture}
   
    \caption{The graph $G$ constructed from the instance $I$ of \textsc{Unary Bin Packing} in Theorem \ref{thm:distF}. A double edge between a vertex $v$ and a set $S$ denotes that $v$ is adjacent to every vertex in $S$.}
    \label{fig:tc}
\end{figure}

There are no edges between $A_i$ and $A_{i'}$ for $i\neq i'$,
no edges between $H_j$ and $H_{j'}$ for $j\neq j'$,
and no edges between any $A_i$ and any $H_j$.
Hence, $G\setminus X$ is a disjoint union of the connected graphs
$A_1,\dots,A_N,H_1,\dots,H_t$.
The construction is computable in polynomial time.

We now prove correctness.
Suppose $I$ is a yes-instance of \textsc{Unary Bin Packing}.
Then there exists a mapping $\gamma:[N]\to[t]$ such that
$\sum_{i\in\gamma^{-1}(j)} a_i \le C$ for all $j\in[t]$.
Construct an edge set $F\subseteq E(G)$ by deleting, for each $i\in[N]$,
all edges between $A_i$ and $X\setminus\{v_{\gamma(i)}\}$.
Clearly, $|F| = a(t-1)=k$.
Let $G' := G\setminus F$.
Each connected component of $G'$ contains exactly one vertex of $X$.
Let $C_j$ denote the component containing $v_j$.
Then
\[
|C_j|
= 1 + |H_j| + \sum_{i\in\gamma^{-1}(j)} |A_i|
= 1 + h' + \sum_{i\in\gamma^{-1}(j)} a_i
\le 1 + (h-C-1) + C
= h.
\]
Thus, $F$ is a feasible solution for $I'$.\\

Suppose $F\subseteq E(G)$ is a solution for $I'$ with $|F|\le k$.
We first claim that no connected component of $G\setminus F$ contains
two vertices of $X$.
Indeed, if a component contains $v_i,v_j\in X$ with $i\neq j$, then
\[
|C| \ge 2 + |H_i| + |H_j| - k
   = 2 + 2h' - k
   = 2(h-C) - k
   = h + 8C > h,
\]
a contradiction.
Hence, every vertex of $A$ is adjacent in $G\setminus F$ to at most one vertex of $X$.
This implies that for every $u\in A$, at least $t-1$ edges incident to $u$ and $X$
must be deleted.
Thus, $|F|\ge a(t-1)=k$, and equality must hold.
Consequently, $F$ consists exactly of these deletions.
Therefore, for each $i\in[N]$ there is a unique $j\in[t]$ such that
$A_i$ is connected to $v_j$ in $G\setminus F$.
Define $\gamma(i)=j$.
Let $C_j$ be the component containing $v_j$.
Then
\[
|C_j|
= |H_j| + 1 + \sum_{i\in\gamma^{-1}(j)} |A_i|
= h' + 1 + \sum_{i\in\gamma^{-1}(j)} a_i
\le h,
\]
which implies $\sum_{i\in\gamma^{-1}(j)} a_i \le C$.
Thus, $\gamma$ defines a valid packing for $I$.
Hence, $I$ is a yes-instance if and only if $I'$ is a yes-instance.
\end{proof}

\begin{cor}\label{cor:masterF}
\textsc{$\mathcal{T}_{h+1}$-Free Edge Deletion} is W[1]-hard when parameterized by each of the following parameters:
\begin{enumerate}[label=(\roman*)]
    \item the size of a vertex set whose deletion turns the graph into a disjoint union of cliques;
    \item the size of a vertex set whose deletion turns the graph into a disjoint union of paths;
    \item the size of a vertex set whose deletion turns the graph into a disjoint union of stars.
\end{enumerate}
Moreover, by choosing all gadgets $A_i$ and $H_j$ to be cliques in the reduction, we obtain that
\textsc{$\mathcal{T}_{h+1}$-Free Edge Deletion} is W[1]-hard parameterized by the twin cover number.
\end{cor}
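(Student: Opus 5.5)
For items (i)–(iii) I will apply Theorem~\ref{thm:distF} to three specific classes: $\mathcal{F}_{\mathrm{clique}}$, the disjoint unions of cliques; $\mathcal{F}_{\mathrm{path}}$, the disjoint unions of paths; and $\mathcal{F}_{\mathrm{star}}$, the disjoint unions of stars. The only hypothesis to check is that each class contains a connected graph on $s$ vertices for every $s\in\mathbb{N}$, and the witnesses are $K_s$, $P_s$ and $K_{1,s-1}$ respectively (for $s=1$ a single vertex is a clique, a path and a trivial star). For each class, the parameter in the statement is exactly the vertex deletion distance to $\mathcal{F}$. Theorem~\ref{thm:distF} then gives W[1]-hardness directly. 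In the reduction, $X$ is a deletion set of size $t$, so each parameter is bounded by the number of bins, as required.

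For the twin cover claim I will reuse the construction from the proof of Theorem~\ref{thm:distF} with every $A_i$ a clique $K_{a_i}$ and every $H_j$ a clique $K_{h'}$. The correctness argument in that proof used only that each gadget is connected and has the prescribed number of vertices, so it still applies. It remains to show that $X=\{v_1,\dots,v_t\}$ is a twin cover, so that $tc(G)\le t$.

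Every edge of $G$ either has an endpoint in $X$ or lies inside some gadget. Take an edge $uw$ inside $A_i$. Both $u$ and $w$ are adjacent to all of $A_i$ and to all of $X$, and to nothing else, so $N[u]=N[w]=A_i\cup X$. Take an edge $uw$ inside $H_j$. Then $N[u]=N[w]=V(H_j)\cup\{v_j\}$. Hence every edge not incident to $X$ is a twin edge.

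The only point needing care, and a mild one, is verifying these closed neighbourhoods. This works because the gadgets are cliques whose vertices are uniformly joined to the same subset of $X$, so no other graph class is needed here.
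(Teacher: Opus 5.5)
Your proposal is correct and takes the same route the paper intends. You instantiate Theorem~\ref{thm:distF} with the disjoint-union-closed classes of disjoint unions of cliques, paths and stars, witnessed by $K_s$, $P_s$ and $K_{1,s-1}$. For the twin cover you make all gadgets cliques, so every edge not incident to $X$ joins two vertices whose common closed neighbourhood is $A_i\cup X$ or $V(H_j)\cup\{v_j\}$.

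The paper leaves two points implicit, and your remarks supply them. First, the theorem's correctness argument uses only the connectivity and size of the gadgets. Second, $G\setminus X$ itself lies in the chosen class, so each parameter is at most $t$.
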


\begin{cor}
   $\mathcal{T}_{h+1}$-\textnormal{\textsc{Free Edge Deletion}}  is W[1]-hard parameterized by the treedepth, cluster vertex deletion number or modular width of the input graph. 
\end{cor}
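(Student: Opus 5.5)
The plan is to derive each of the three hardness claims from Corollary~\ref{cor:masterF}, using two facts: W[1]-hardness transfers along the arrows of Figure~\ref{overview}, and the graphs built in the reduction of Theorem~\ref{thm:distF} have tightly controlled structure. The transfer principle is standard. If every graph satisfies $f(p(G)) \ge q(G)$, and the problem is W[1]-hard parameterized by $p$, then it is also W[1]-hard parameterized by $q$. Indeed, an FPT algorithm in $q$ runs in time $g(q(G))\,n^{O(1)} \le g'(p(G))\,n^{O(1)}$, where $g'(x)=\max_{y\le f(x)} g(y)$, so it would also be FPT in $p$. So for each target parameter it suffices to bound it by a function of a parameter already known to be hard.

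\textbf{Cluster vertex deletion number.} Corollary~\ref{cor:masterF}(i) is exactly hardness parameterized by the size of a cluster vertex deletion set. The cluster vertex deletion number is at most the size of any such set, so this case is immediate.

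\textbf{Modular width.} I would argue directly on the clique instance, taking $\mathcal{F}$ to be the class of cliques. In $G$, each $A_i$ is a module, since all its vertices are adjacent to all of $X$ and to nothing else outside $A_i$. Likewise each $H_j$ is a module, with all its vertices adjacent exactly to $v_j$ outside $H_j$. The $A_i$ have identical outside neighbourhoods, so $A=\bigcup_i A_i$ is also a module. It is a disjoint union of cliques, so its modular decomposition has small prime nodes.

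The quotient of $G$ on the modules $\{A, H_1,\dots,H_t, v_1,\dots,v_t\}$ has $2t+1$ vertices, and each clique module decomposes via series nodes. Hence the modular width is at most $2t+1$, a function of the parameter $t$. Alternatively, one can note that twin cover bounds modular width (an arrow in Figure~\ref{overview}) and invoke the twin-cover part of Corollary~\ref{cor:masterF}.

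\textbf{Treedepth.} I would use Corollary~\ref{cor:masterF}(iii) with $\mathcal{F}$ the class of stars, which contains a connected graph on every $s$ vertices, and show that treedepth is bounded by the distance to stars. Every star has treedepth at most $2$: root it at its centre. For a deletion set $S$ with $|S|=d$, build a rooted path on $S$ and hang the depth-$2$ forests of $G-S$ below its last vertex. Every edge of $G$ is then between an ancestor and a descendant, either inside the star forest or involving a vertex of $S$. So $td(G)\le d+2$, which matches the arrow vds $\to$ td in Figure~\ref{overview}.

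The only step that needs care, and hence the main obstacle, is making sure the bounding function in each case depends on the parameter alone and not on $h$ or $n$. For the modular width case specifically, I would also check that the modular decomposition does not produce a large prime node. That holds because every prime node in the decomposition above has at most $2t+1$ children.
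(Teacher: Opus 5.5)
Your proposal is correct and follows the paper's implicit route. The paper obtains this corollary from Corollary~\ref{cor:masterF} by transferring hardness along the parameter relations in Figure~\ref{overview}: vds $\to$ td, and the cluster-deletion and twin-cover bounds for cvd and mw. Your explicit checks fill in details the paper leaves to the figure, namely that $td(G)\le d+2$ when $d$ is the vertex deletion distance to stars, and that the clique instance has modular width at most $2t+1$.
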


\section{FPT Parameterized by Cluster Vertex Deletion and \texorpdfstring{$h$}{h}}\label{sec2}
We assume that a cluster vertex deletion (CVD) set $X$ of size $\ell$ is given with the input graph $G$. 
Otherwise, we can run the algorithm by Boral, Cygan, Kociumaka, and Pilipczuk~\cite{bib9} that either outputs a CVD of size at most $\ell$ or says that no such set exists in $\mathcal{O}(1.9102^{\ell}(n+m))$ time.

Let $F$ be an optimal solution to $\mathcal{T}_{h+1}$-\textnormal{\textsc{Free Edge Deletion}}  on $G$.
For any connected component $C$ of $G \setminus X$, let $C' \subseteq V(C)$ be the set of vertices
that belong to connected components of $G \setminus F$ which do \emph{not} intersect $X$.
The following lemma characterizes the structure induced by the vertices of $C'$ after deleting the optimal edge set $F$.

\begin{lem}\label{lem1}
     If $|C'| = qh+r$ where $q,r \in \mathbb{N }$ and $0 \leq r <h$, then in the graph $G \setminus F$, the set $C'$ induces a disjoint union of $q$ cliques of size $h$ and one clique of size $r$.
\end{lem}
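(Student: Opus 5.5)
We argue by an exchange argument that exploits the optimality of $F$. Since $C$ is a component of $G\setminus X$ and $X$ is a cluster vertex deletion set, $C$ is a clique. Any component $D$ of $G\setminus F$ that meets $C'$ is, by the definition of $C'$, disjoint from $X$. Its vertices therefore lie in $G\setminus X$, and because $D$ is connected, $D\subseteq V(C)$. Hence $C'$ is partitioned into the vertex sets $D_1,\dots,D_p$ of the components of $G\setminus F$ that lie inside $C'$, each of size at most $h$.

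\emph{Step 1: each $D_s$ is a clique in $G\setminus F$.} Suppose $F$ contains an edge $uv$ with $u,v\in D_s$. Putting $uv$ back does not change the component structure, so $F\setminus\{uv\}$ is still feasible and strictly smaller, contradicting optimality. Therefore every edge of $C$ inside $D_s$ is kept. Since $C$ is a clique, each $D_s$ induces a clique in $G\setminus F$. Moreover, every edge of $C$ between two different parts $D_s,D_{s'}$ must belong to $F$.

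\emph{Step 2: the partition has the claimed sizes.} The edges of $F$ that are incident to $C'$ consist of the edges between different parts, the edges from $C'$ to $V(C)\setminus C'$, and the edges from $C'$ to $X$. Only the first kind depends on how $C'$ is split into parts. For the last two kinds, the edges from a vertex of $C'$ to $X$ and to $V(C)\setminus C'$ are all deleted no matter how $C'$ is partitioned. Keeping everything else fixed, the cost contributed by the partition is
\[
\binom{|C'|}{2}-\sum_{s}\binom{|D_s|}{2}.
\]
Take two parts $D_s,D_{s'}$ with $|D_s|\le|D_{s'}|<h$. Move one vertex from $D_s$ to $D_{s'}$, adding and deleting the corresponding edges inside $C$ so that both new parts are cliques and are separated from each other. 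This yields another feasible solution, since the new parts have size at most $h$ and stay disjoint from the rest. Its change in cost is
\[
(|D_s|-1)-|D_{s'}|<0,
\]
which is negative. This contradicts optimality.

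Consequently, in $F$ at most one part has size less than $h$. A partition of $|C'|=qh+r$ into parts of size at most $h$, with at most one part of size less than $h$, must consist of exactly $q$ parts of size $h$ and one part of size $r$. This part is empty when $r=0$, which matches the statement's convention.

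The main obstacle is verifying that the moving operation really preserves feasibility and affects only edges inside $C$. The point to check is that the moved vertex $w\in C'$ has no surviving edges to $X$ or to the rest of $C$. Because $w$'s component avoids $X$ and stays within $C'$, all those edges are already in $F$ and remain there. Hence the two modified components stay separated from everything else, with sizes $|D_s|-1$ and $|D_{s'}|+1\le h$.
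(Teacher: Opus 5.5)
Your proposal is correct and follows essentially the same route as the paper: move a vertex from the smaller of two components of size below $h$ into the larger one, which lowers the cost by $|D_{s'}|-|D_s|+1>0$ and contradicts optimality. The only difference is that you make two points explicit that the paper leaves implicit. By optimality, no edge inside a component is deleted, so each component is a clique. Each component meeting $C'$ also lies entirely inside $C$.
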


\begin{proof} Let $G'$ be the subgraph of $G \setminus F$ induced by $C'$.
Every connected component of $G'$ has size at most $h$. 
Assume, towards a contradiction, that the lemma does not hold.
Then there exist two components, say $H_1, H_2$, of $G'$ such that  $|V(H_1)| \leq |V(H_2)|  \leq h-1$. 
Define a set $F' \subseteq E(G)$ by reassigning a vertex $v$ of $ H_1$ to the component $H_2$; that is,
$F' = (F \setminus E(v, H_2)) \cup E(v, H_1)$,
where $E(v, H_i)$ denotes set of all edges between $v$ and $H_i$ in $G$.
Then, the connected components of $G \setminus F'$ are $H_1 \setminus \{v\}$ and $H_2 \cup \{v\}$, together with all connected components of $G \setminus F$ other than $H_1$ and $H_2$.
Hence, all connected components of $G \setminus F'$ have size at most $h$.
Moreover, $|F'| = |F| - |V(H_2)| + |V(H_1)| - 1 < |F|$.
Thus, $F'$ is a feasible solution, contradicting our assumption that $F$ is an optimal solution.
Hence, at most one connected component of $G'$ can have size less than $h$, which proves the lemma.
\end{proof}

The specific vertices of $C$ that constitute $C'$ depend on the choice of the optimal solution $F$.
To address this, we partition $C$ into equivalence classes
$P_1, P_2, \dots, P_{2^{\ell}}$ according to their neighborhoods in $X$
(note that some classes may be empty); that is, for $u,v \in C$,
we have $u, v \in P_i$ if and only if $N(u) \cap X = N(v) \cap X$.
We observe that any two vertices in the same class $P_i$ are true twins.
Due to this structural symmetry, swapping the roles of any two vertices in $P_i$ in the graph $G \setminus F$
preserves the size of the solution. 
Consequently, it suffices to keep track of the number of vertices chosen from each class $P_i$, rather than their identities.
The following reduction rule is used to bound the sizes of the connected components of $G \setminus X$.

\begin{redn}\label{redn1}
If $|P_i| \geq \ell (h-1) +h$, then remove an arbitrary set of $h$ vertices of $P_i$ from $G$ and decrease the parameter $k$ by $h \cdot (|C|-h +|N_{X}(P_i)|)$. If, after applying the rule, the parameter $k$ becomes negative, then return a no-instance.
\end{redn}
\begin{lem}
 Reduction rule \ref{redn1} is safe.
\end{lem}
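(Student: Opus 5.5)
The plan is to show that the instance $(G,k,h)$ is a yes-instance if and only if the reduced instance $(G_1,k_1,h)$ is one. Here $G_1$ is obtained by deleting a set $S$ of $h$ vertices of $P_i$, and $k_1 = k - h\cdot(|C|-h+|N_X(P_i)|)$. The key quantity is the number of edges of $G$ incident to $S$:
\[
\tfrac{h(h-1)}{2} + h\,(|C|-h) + h\,|N_X(P_i)|.
\]
The first term counts edges inside $S$; the second counts edges from $S$ to $C\setminus S$, since $C$ is a clique; the third counts edges from $S$ to $X$, since all vertices of $P_i$ have the same $X$-neighbourhood. So if $S$ is made an isolated clique of size $h$, exactly $h(|C|-h+|N_X(P_i)|)$ edges are deleted, which is the amount subtracted from $k$.

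\emph{Backward direction.} Suppose $F_1$ is a solution of $G_1$ with $|F_1|\le k_1$. Let $F$ be $F_1$ together with all edges between $S$ and $V(G)\setminus S$. Then $S$ forms a component of size $h$ in $G\setminus F$, and every other component is a component of $G_1\setminus F_1$. Hence $F$ is feasible, and $|F| \le k_1 + h(|C|-h+|N_X(P_i)|) = k$.

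\emph{Forward direction.} Take an optimal solution $F$ of $G$ with $|F|\le k$.
\begin{enumerate}
\item At most $\ell$ components of $G\setminus F$ meet $X$. Each has at most $h$ vertices, including at least one vertex of $X$, so together they contain at most $\ell(h-1)$ vertices of $C$.
\item Since $|P_i|\ge \ell(h-1)+h$, at least $h$ vertices of $P_i$ lie in $C'$.
\item By Lemma~\ref{lem1}, $C'$ induces in $G\setminus F$ cliques of size $h$ plus one remainder clique. All $X$-edges at vertices of $C'$ are in $F$.
\end{enumerate}

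The main obstacle is that these full cliques may mix vertices from different classes. The normalization step handles this. For vertices of $C'$, the cost charged by $F$ has two parts: the $X$-edges, which are all deleted regardless of the grouping, and the edges of the clique $C$ cut between different groups and towards $C\setminus C'$. The second part depends only on the group sizes, because $C$ is complete. So we may repartition $C'$ into groups of the same sizes, arbitrarily, without changing $|F|$ or feasibility. In particular we can make one size-$h$ group consist of $h$ vertices of $P_i$. Since vertices of $P_i$ are true twins, we may then rename so that this group is exactly $S$.

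For this normalized $F$, the set $S$ is an isolated component of $G\setminus F$, and exactly $h(|C|-h+|N_X(P_i)|)$ edges of $F$ are incident to $S$. Let $F_1 = F\cap E(G_1)$. Then $|F_1|\le k_1$, and the components of $G_1\setminus F_1$ are components of $G\setminus F$, so $F_1$ is a solution for $G_1$. In particular, if $k_1<0$ then no solution of size at most $k$ exists in $G$, which justifies returning a no-instance.

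The delicate point is the repartitioning argument. It needs a careful count showing that the cost of deleting edges inside $C$ is $\binom{|C|}{2}$ minus the sum of $\binom{s}{2}$ over the part sizes $s$, where the parts are the groups of $C'$ together with the blocks $C\cap D$ for the components $D$ meeting $X$. This count depends only on the sizes of the parts, which is what makes the swap free.
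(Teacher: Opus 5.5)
Your proposal is correct and follows essentially the same route as the paper: bound $|C\setminus C'|\le \ell(h-1)$, use Lemma~\ref{lem1} to get a size-$h$ clique component inside $C'$, move that component into $P_i$, and strip it off, with the converse obtained by isolating $S$. Two steps are just more explicit versions of the paper's. Your repartitioning count (crossing edges inside the clique $C$ depend only on the part sizes) justifies what the paper asserts in one line, and your twin-automorphism renaming plays the role of the paper's isomorphism $G\setminus H\cong G'$.
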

\begin{proof}
Let $(G,k,h)$ be an instance of $\mathcal{T}_{h+1}$-\textnormal{\textsc{Free Edge Deletion}} , let $X$ be a cluster vertex deletion set of size $\ell$ of $G$, and $C$ be a connected component of $G \setminus X$. 
Suppose $C$ is  partitioned into $P_1,\dots, P_{2^\ell}$ according to their neighborhood in $X$, and let $|P_i|\geq \ell (h-1) +h$ for some $i$. Let $G'$ be the new graph obtained by removing an arbitrary set $S \subseteq P_i$ of $h$ vertices from $G$ and let $k'=k-h \cdot (|C|-h +|N(P_i)|)$. We show that $(G, k,h)$ is a yes-instance if and only if $(G',k',h)$ is a yes-instance. 

Suppose $(G,k,h)$ is a yes-instance, and let $F$ be an optimal solution. 
Let $C' \subseteq C$ denote the set of vertices that are contained in connected components of $G \setminus F$ which do not intersect $X$.
Observe that at most $\ell (h-1)$ vertices can belong to $C \setminus C'$ because any component of $G \setminus F$ that intersects with $X$ can contain at most $h-1$ vertices of $C$. So if $|P_i| \geq \ell (h-1) +h$, we have $ |C' \cap P_i| \geq h$. 
Since $|C'| \geq h$, Lemma \ref{lem1} implies that the subgraph of $G \setminus F$ induced by $C'$ contains at least one connected component that is a clique of size $h$. 
Note that all edges between $C'$ and $V(G) \setminus C'$ are contained in $F$.
This, together with the fact that $C'$ induces a clique in $G$, implies that the specific vertices forming such a component can be chosen arbitrarily from $C'$. 
Therefore, since $|C' \cap P_i| \ge h$, we may assume without loss of generality that the vertex set $H$ inducing such a component is contained in $C' \cap P_i$.
The number of edges between $H$ and $V(G) \setminus H$ in $G$ is $$|E(H, V(G) \setminus H)|= |H|(|C|-|H|) + |H||N_{X}(P_i)| = h (|C|-h) + h |N_{X}(P_i)|.$$
Define $F'=F \setminus E(H, G\setminus H)$.
Clearly, $|F'| \leq k'$, and all connected components of $(G \setminus H) \setminus F'$ has size at most $h$.
So $(G \setminus H, k', h)$ is a yes-instance. 
Since all vertices of $P_i$ have identical closed neighborhoods in $G$, $G \setminus H$ and $G'$ are isomorphic.
Hence, $(G', k', h)$ is a yes-instance.

Conversely, let $(G', k', h)$ be a yes-instance, and let $F'$ be an optimal solution.
Let $E(S, G')$ denote the set of all edges in $G$ with one endpoint in $S$ and the other in $V(G')$.
Then,
\[
|E(S, G')| = |S|(|C| - |S|) + |S|\cdot |N_{X}(P_i)| = h(|C| - h) + h\cdot |N_{X}(P_i)|.
\]
Now consider the set $F = F' \cup E(S, G \setminus S)$.
Then $|F| \le k$, and the connected components of $G \setminus F$ are exactly the connected components of $G' \setminus F'$ together with $G[S]$, all of which have size at most $h$.
Hence, $(G, k, h)$ is a yes-instance.
\end{proof}
 
 \begin{thm}
    $\mathcal{T}_{h+1}$-\textnormal{\textsc{Free Edge Deletion}}  can be solved in $\mathcal{O}\!\left((2^{\ell}\ell h^{2})^{\mathcal{O}(2^{\ell}\ell h)} \cdot (n+m)\right)$ time.
\end{thm}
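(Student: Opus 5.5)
Starting from the given cluster vertex deletion set $X$ with $|X|=\ell$, I apply Reduction Rule~\ref{redn1} exhaustively. Each component $C$ of $G\setminus X$ is a clique whose vertices fall into at most $2^{\ell}$ neighbourhood classes $P_i$. Once the rule no longer applies, every class has size below $\ell(h-1)+h$. Consequently every clique component satisfies $|C| < 2^{\ell}(\ell(h-1)+h) = \mathcal{O}(2^{\ell}\ell h)$. Each application removes $h$ vertices and only needs the class sizes, so all applications together cost $\mathcal{O}(n+m)$ time with suitable bookkeeping. By the safeness lemma, the reduced instance is equivalent to the original.

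Next I bound the pathwidth of the reduced graph $G'$. I build a path decomposition by taking one bag $X \cup V(C)$ for every clique component $C$ of $G'\setminus X$ and concatenating these bags in arbitrary order. Every edge lies inside $X$, inside some $C$, or between $X$ and some $C$, so each edge is covered by some bag. Vertices of $X$ occur in every bag and each vertex of a component $C$ occurs in exactly one bag, so the interpolation property holds. The width is therefore at most $\ell + 2^{\ell}(\ell(h-1)+h) = \mathcal{O}(2^{\ell}\ell h)$. I then invoke the Enright--Meeks dynamic programming algorithm, which runs in time $\mathcal{O}((h\cdot tw)^{2tw}\cdot n)$ on graphs with a given tree decomposition of width $tw$. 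Substituting $tw=\mathcal{O}(2^{\ell}\ell h)$ gives $h\cdot tw = \mathcal{O}(2^{\ell}\ell h^2)$, and hence a running time of $(2^{\ell}\ell h^{2})^{\mathcal{O}(2^{\ell}\ell h)}\cdot n$. Adding the linear-time preprocessing yields the claimed bound.

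I expect the main obstacle to be the bookkeeping of the reduction phase, not the final dynamic program. The decrement of $k$ in Reduction Rule~\ref{redn1} depends on the current $|C|$, so the rule must be applied one step at a time while $|C|$ is updated. Equivalently, I could compute in closed form how many times the rule fires on each class, namely $\lfloor (|P_i|-\ell(h-1)-h)/h\rfloor+1$ times. The total decrement would then be a sum that changes as $|C|$ shrinks, and evaluating it in linear time needs care. Here I would first try processing each component class by class while maintaining $|C|$, which takes $\mathcal{O}(|C|+2^{\ell})$ time per component. If some decrement makes $k$ negative, we return a no-instance.

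I must also check that $X$ remains a cluster vertex deletion set after each reduction, which holds because deleting vertices from a clique leaves a clique. Finally, the partition into classes $P_i$ can be computed in $\mathcal{O}(n+m)$ time by hashing each vertex's neighbourhood in $X$, which is a subset of a set of size $\ell$.
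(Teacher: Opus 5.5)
Your proposal is correct and follows the paper's proof essentially step by step. Both apply Reduction Rule~\ref{redn1} exhaustively, bound each clique component of $G\setminus X$ by $\mathcal{O}(2^{\ell}\ell h)$, build the path decomposition with bags $X\cup V(C)$, and run the Enright--Meeks dynamic program on it. Your extra care about the linear-time bookkeeping (updating $|C|$ as $k$ is decremented) fills in a step the paper only asserts, and your use of the $(h\cdot tw)^{2tw}$ bound gives the same final running time.
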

\begin{proof}
Given an instance $(G,k,h)$ of $\mathcal{T}_{h+1}$-\textnormal{\textsc{Free Edge Deletion}}  with $n$ vertices and a cluster vertex deletion set $X$ of size $\ell$, we apply Reduction Rule~\ref{redn1} exhaustively.
If, at any stage, the parameter $k$ becomes negative, we return a no-instance.
Otherwise, in the reduced graph $G$, every connected component of $G \setminus X$ has size at most $2^{\ell}(\ell+1)(h-1)$.
Let $C_1, C_2, \dots, C_r$ denote the connected components of $G \setminus X$.
Then $G$ admits a path decomposition with bags $B_i = X \cup C_i$ for all $i \in [r]$.
Hence, the pathwidth of $G$ is at most $\ell + 2^{\ell}(\ell+1)(h-1)$.
We solve the problem using dynamic programming on the constructed path decomposition. The algorithm follows the standard formulation for $\mathcal{T}_{h+1}$-\textnormal{\textsc{Free Edge Deletion}}  on tree decompositions given by~\cite{bib1}, with the simplification that our decomposition contains no join nodes.

The connected components of $G \setminus X$ can be computed using a standard graph traversal,
such as breadth-first search (BFS) or depth-first search (DFS), in time $\mathcal{O}(n+m)$.
The partitioning of these connected components into equivalent classes can be done using a partition refinement algorithm; We initialize the partition $\mathcal{P}$ with the vertex sets of the connected components of $G \setminus X$. We then iteratively refine $\mathcal{P}$ using the neighborhoods of each vertex $x \in X$ as a pivot. By processing each edge incident to $X$ exactly once, the total running time is linear in the size of the graph.
Applying the reduction rule on these partitions can also be done in linear time. Since the states and transition functions for introduce and forget nodes in the dynamic programming algorithm remain identical, the running time bound of $\mathcal{O}((wh)^w \cdot n)$, where $w$ is the pathwidth, follows directly. Hence $\mathcal{T}_{h+1}$-\textnormal{\textsc{Free Edge Deletion}}  can be solved in $\mathcal{O}\!\left((2^{\ell}\ell h^{2})^{\mathcal{O}(2^{\ell}\ell h)} \cdot n+m\right)$ time.
\end{proof}

\begin{thm}\label{thm:vdclique_iqp}
\textsc{$\mathcal{T}_{h+1}$-Free Edge Deletion} is fixed-parameter tractable when
parameterized by the size $\ell$ of a vertex deletion set to a clique.
\end{thm}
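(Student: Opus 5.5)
Fix a set $X$ of size $\ell$ such that $K := V(G)\setminus X$ is a clique. It can be found in $\mathcal{O}(2^{\ell}\cdot\mathrm{poly}(n))$ time, or assumed to be given. The plan is to reformulate the problem as an integer program whose number of variables depends only on $\ell$, with a concave-type quadratic objective, and to solve it in FPT time.

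\emph{Reformulation as a partition problem.} A solution corresponds to a partition of $V(G)$ into parts of size at most $h$: delete exactly the edges between different parts. Connectivity inside a part is irrelevant, and an optimal $F$ is always of this form (take the parts to be the components of $G\setminus F$). Next, split $K$ into at most $2^{\ell}$ classes $P_1,\dots,P_{2^\ell}$ according to $N_X(\cdot)$, with $n_i := |P_i|$. As observed before Reduction Rule~\ref{redn1}, vertices in the same class are true twins, so only the number of vertices of each class in each part matters. We branch over all partitions $X_1,\dots,X_p$ of $X$ into the groups that end up in a common part; there are at most $\ell^{\ell}$ guesses and $p\le \ell$.

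\emph{Structure of the free parts.} The remaining parts lie entirely inside $K$; call them free parts. All edges from a free part to $X$ and to the rest of $K$ are deleted. So the cost of the free parts depends only on the counts $y_i$ of free vertices in each class (through a linear term $\sum_i y_i|N_X(P_i)|$) and on the multiset of their sizes. By the exchange argument of Lemma~\ref{lem1}, which uses only that $K$ is a clique, we may assume that the free parts consist of $q$ parts of size $h$ and one part of size $r$ with $0\le r\le h-1$.

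\emph{The integer program.} For each guess, introduce integer variables $x_{j,i}\ge 0$ (the number of vertices of $P_i$ placed with $X_j$), $y_i\ge 0$, and $q,r$. The constraints are all linear:
\[
|X_j|+\sum_i x_{j,i}\le h \ \ (j\in[p]),\qquad \sum_j x_{j,i}+y_i=n_i \ \ (i),\qquad \sum_i y_i = qh+r,\qquad 0\le r\le h-1 .
\]
The number of deleted edges has three contributions:
\begin{itemize}[label={}]
\item a constant, namely the edges of $G[X]$ between different groups;
\item a term that is linear in the variables, namely the edges between $X$ and $K$ whose endpoints lie in different parts;
\item the edges inside $K$ whose endpoints lie in different parts, whose number is
\[
\tfrac12\Bigl(|K|^2-\sum_j\bigl(\textstyle\sum_i x_{j,i}\bigr)^2-qh^2-r^2\Bigr)-\tfrac12\bigl(|K|-|K|\bigr),
\]
where the last term is zero and only records that we count unordered pairs.
\end{itemize}
Hence minimizing the total cost amounts to maximizing a \emph{convex} quadratic function subject to linear constraints in $d=\mathcal{O}(2^{\ell}\ell)$ variables. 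We accept the guess if and only if the optimum is at most $k$.

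\emph{Solving the program, and the main obstacle.} The main obstacle is this last step. Coefficients such as $n_i$ and $|K|$ are unbounded, so a generic integer quadratic programming result parameterized by the number of variables and the coefficient size does not apply directly. The first approach I would try uses convexity. A convex function attains its maximum over the integer hull of the polytope at one of its vertices. The number of vertices of the integer hull of a polytope in dimension $d$ is bounded by $(\mathcal{O}(\log \Delta))^{d}$, where $\Delta$ is the largest coefficient (Cook, Hartmann, Kannan and McDiarmid), and these vertices can be enumerated in time $f(d)\cdot \mathrm{poly}(\log \Delta)$. Since $\Delta\le n$ and $(\log n)^{d}\le f(d)+n$, evaluating the objective at every vertex gives FPT time in $\ell$. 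As a fallback, I would invoke an integer quadratic programming algorithm whose running time is FPT in the number of variables and polynomial in the encoding length, for instance in the style of Lokshtanov's result. Combining this with the $\ell^{\ell}$ guesses for the grouping of $X$ gives the claimed fixed-parameter tractability.
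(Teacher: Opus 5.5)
Your proof is correct, but it takes a genuinely different route from the paper.

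\emph{The paper's route.} The paper first applies Reduction Rule~\ref{redn1} exhaustively, which bounds $|V(G)|\le \ell+2^\ell(\ell+1)h$. It then merges small components to obtain an optimal solution whose vertex set splits into at most $f(\ell)=2^{\ell+1}(\ell+1)+2\ell+1$ parts of size at most $h$, all but one of size at least $\lceil h/2\rceil$. It guesses the number of parts. Because the number of parts is now bounded by a function of $\ell$, each part gets its own block of variables. Every entry of the constraint matrix and of the (indefinite, bilinear) objective then lies in $\{-1,0,1\}$. The large numbers $|P_S|$, $h$, $\lceil h/2\rceil$ occur only as right-hand sides. So Lokshtanov's IQP theorem, which is FPT in the number of variables plus the largest entry of $Q$ and $A$, applies as a black box, and no convexity is used.

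\emph{Your route.} You skip the reduction rule entirely. Only the at most $\ell$ parts meeting $X$ are represented individually. The unboundedly many parts inside the clique are compressed, via the exchange argument of Lemma~\ref{lem1}, into the two variables $q,r$. This gives a bounded number of variables without bounding $|V(G)|$ in terms of $h$. The price is genuinely large coefficients: $h$ in $\sum_i y_i=qh+r$ and $h^2$ in the objective, which puts Lokshtanov's theorem out of reach. You compensate with concavity. After lifting with $q$ and $r$ the cost is concave, whereas the sawtooth $\lfloor Y/h\rfloor h^2+(Y \bmod h)^2$ in $Y=\sum_i y_i$ alone would not be. So the minimum is attained at a vertex of the integer hull. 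You then use the Cook--Hartmann--Kannan--McDiarmid vertex bound together with Hartmann's enumeration in fixed dimension. This is a heavier tool, but a legitimate one.

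\emph{Details to tighten.}
\begin{itemize}
\item The quantities $n_i$ and $|K|$ that you single out are harmless for Lokshtanov's theorem, which allows arbitrary right-hand sides. The real obstruction is $h$ and $h^2$ appearing as coefficients.
\item For the same reason your fallback is not available. That theorem needs the entries of $Q$ and $A$ bounded, and no algorithm for general IQP that is FPT in the number of variables alone is at hand. The concave argument is what actually carries the proof.
\item The claimed $f(d)\cdot\mathrm{poly}(\log\Delta)$ enumeration time is stronger than what is known. The vertex bound itself is $2m^{d}(6d^2\varphi)^{d-1}$, and Hartmann's algorithm is polynomial only for fixed $d$. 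Nothing breaks, though: your program has $m=O(d)$ constraints and total encoding length $O(d^2\log n)$. Any running time of the form $(d^2\log n)^{g(d)}$ is therefore FPT by the same $(\log n)^{g(d)}\le 2^{g(d)^2}+n$ trick you already mention.
\end{itemize}
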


\begin{proof}
Let $(G,k,h)$ be an instance of \textsc{$\mathcal{T}_{h+1}$-Free Edge Deletion}.
Let $X\subseteq V(G)$ be a given vertex deletion set of size $\ell$ such that
$C:=V(G)\setminus X$ induces a clique. 
For each subset $S\subseteq X$, define
$P_S := \{v\in C : N_X(v)=S\}$.
Then $\{P_S : S\subseteq X\}$ is a partition of $C$ into at most $2^\ell$ classes,
and any two vertices within the same class are true twins.
We apply Reduction Rule~\ref{redn1} exhaustively to $G$ (with respect to the fixed set $X$).
If at any point the parameter $k$ becomes negative, we return a \textsc{No}-instance.
Otherwise, in the reduced instance we have 
$|P_S| \le \ell(h-1)+h = (\ell+1)h-\ell$ for every $S\subseteq X$.
Consequently,
\begin{equation}\label{eq:Vbound}
|C| \le 2^\ell\big((\ell+1)h-\ell\big) \le 2^\ell(\ell+1)h
\quad\text{and hence}\quad
|V(G)| \le \ell + 2^\ell(\ell+1)h.
\end{equation}

A feasible solution $F$ defines the graph $H:=G\setminus F$, whose connected components
all have size at most $h$.
We will argue that we may assume an optimal solution admits a \emph{partition into parts}
where all but at most one part have size at least $\lceil h/2\rceil$; this yields a bound
on the number of parts depending only on $\ell$.

\begin{claim}\label{claim:parts}
There exists an optimal solution $F^\star$ such that in $H^\star := G\setminus F^\star$,
the vertex set $V(G)$ can be partitioned into parts $Q_1,\dots,Q_p$ satisfying:
\begin{enumerate}
  \item $|Q_r|\le h$ for every $r\in[p]$, and
  \item all but at most one part satisfy $|Q_r|\ge \lceil h/2\rceil$.
\end{enumerate}
\end{claim}
\begin{claimproof}
Let $F$ be an optimal solution and let $H:=G\setminus F$.
Each connected component of $H$ has size at most $h$.
We form parts by repeatedly merging pairs of connected components of size at most
$\lfloor h/2\rfloor$ as long as possible.
Each merge preserves the size bound $h$, and when the process terminates, at most one
part has size at most $\lfloor h/2\rfloor$, proving Item~(2).

Let $Q_1,\dots,Q_p$ denote the resulting parts and define
$F^\star$ by deleting all edges between distinct parts.
Since each part is a union of connected components of $H$, every edge between two
different parts must already belong to $F$, implying $F^\star\subseteq F$ and
$|F^\star|\le |F|$.
Thus $F^\star$ is optimal.
Moreover, every connected component of $G\setminus F^\star$ is contained in a single
part and hence has size at most $h$, proving Item~(1).
\end{claimproof}

\begin{claim}
   In Claim~\ref{claim:parts}, we may assume the number of parts satisfies $$p \le f(\ell) := 2^{\ell+1}(\ell+1)+2\ell+1$$.
\end{claim}
\begin{claimproof}
    By Claim~\ref{claim:parts}, all but at most one part have size at least $\lceil h/2\rceil$.
Thus
\[
(p-1)\cdot \left\lceil \frac{h}{2}\right\rceil \le |V(G)|-1
\quad\Rightarrow\quad
p \le \left\lceil \frac{2|V(G)|}{h}\right\rceil + 1.
\]
Using~\eqref{eq:Vbound},
\[
p \le \left\lceil \frac{2(\ell+2^\ell(\ell+1)h)}{h}\right\rceil + 1
= \left\lceil \frac{2\ell}{h}+2^{\ell+1}(\ell+1)\right\rceil + 1
\le 2^{\ell+1}(\ell+1)+2\ell+1,
\]
where we used $h\ge 1$ to bound $2\ell/h\le 2\ell$.
\end{claimproof}

By Claim~2, there exists an optimal solution that admits a partition into
$p\le f(\ell)$ parts satisfying Claim~1.
We iterate over all integers $p\in\{1,2,\dots,f(\ell)\}$.
For each such $p$, we also iterate over a choice $s\in\{0,1,2,\dots,p\}$, where
$s=0$ means ``there is no small part'' and $s\in[p]$ designates the unique small part.
For each guess $(p,s)$, we construct and solve an Integer Quadratic Program (IQP)
whose optimum equals the minimum number of deleted edges among all feasible partitions
consistent with the guess. We accept if and only if this optimum is at most $k$.

Now, we give Integer Quadratic Programming (IQP) formulation for a fixed guess $(p,s)$. We introduce the following variables.
For each part $r\in[p]$ and each $S\subseteq X$, introduce an integer variable $x_{r,S}\in \mathbb{Z}_{\ge 0}$,
interpreted as the number of vertices from the twin class $P_S$ assigned to part $r$.
For each part $r\in[p]$ and each $u\in X$, introduce a binary variable $y_{r,u}\in\{0,1\}$,
interpreted as $y_{r,u}=1$ if and only if $u$ is assigned to part $r$.
Define shorthand:
\[
c_r := \sum_{S\subseteq X} x_{r,S},\qquad
x_r := \sum_{u\in X} y_{r,u},\qquad
t_r := c_r + x_r.
\]

\smallskip
\noindent\emph{Linear constraints.}
\begin{alignat}{2}
\sum_{r=1}^p x_{r,S} &= |P_S| \qquad &&\forall S\subseteq X, \label{con:class}\\
\sum_{r=1}^p y_{r,u} &= 1 \qquad &&\forall u\in X, \label{con:Xassign}\\
t_r &\le h \qquad &&\forall r\in[p], \label{con:ub}\\
t_r &\ge \left\lceil \frac{h}{2}\right\rceil \qquad &&\forall r\in[p]\text{ with } r\ne s \text{ and } s\ne 0. \label{con:lb}
\end{alignat}
If $s=0$, then we enforce~\eqref{con:lb} for all $r\in[p]$.
All coefficients in these constraints belong to $\{-1,0,1\}$
We minimize the number of edges crossing between different parts. Let $E_X:=E(G[X])$.

\begin{itemize}
\item \textbf{Edges inside the clique $C$.}
Since $C$ is a clique, the number of edges with endpoints in different parts equals
\[
\mathrm{cut}_C := \sum_{1\le r<q\le p} c_r\,c_q.
\]

\item \textbf{Edges between $X$ and $C$.}
A vertex $u\in X$ is adjacent to precisely those vertices of $C$ in classes $P_S$
with $u\in S$. The number of crossing edges between $u$ (placed in part $r$) and clique
vertices placed in a different part $q\ne r$ equals
\[
\sum_{\substack{q=1\\q\ne r}}^p\;\sum_{\substack{S\subseteq X\\ u\in S}} x_{q,S}.
\]
Thus
\[
\mathrm{cut}_{X,C}
:= \sum_{u\in X}\sum_{r=1}^p
y_{r,u}\cdot
\left(\sum_{\substack{q=1\\q\ne r}}^p\;\sum_{\substack{S\subseteq X\\ u\in S}} x_{q,S}\right).
\]

\item \textbf{Edges inside $X$.}
An edge $\{u,v\}\in E_X$ is crossing iff $u$ and $v$ are assigned to different parts. Hence
\[
\mathrm{cut}_X
:= \sum_{\{u,v\}\in E_X}\;\sum_{1\le r<q\le p}
\bigl(y_{r,u}y_{q,v}+y_{q,u}y_{r,v}\bigr).
\]
\end{itemize}

We set the IQP objective to
\[
\min\ \ \mathrm{cut}_C+\mathrm{cut}_{X,C}+\mathrm{cut}_X.
\]
Every coefficient appearing in the quadratic terms is in $\{0,1\}$, and there are no
negative coefficients.

\begin{claim}
   For a fixed guess $(p,s)$, the optimum value of the IQP equals the minimum size of an
edge set $F$ such that $G\setminus F$ has a partition into $p$ parts satisfying
Constraints~\eqref{con:class}--\eqref{con:lb}, and every connected component of
$G\setminus F$ has size at most $h$. 
\end{claim}
\begin{claimproof}
   Consider any feasible assignment to the IQP variables.
Constraints~\eqref{con:class} and~\eqref{con:Xassign} define a partition of $V(G)$ into
$p$ parts by assigning exactly $x_{r,S}$ vertices from each class $P_S$ and exactly one
copy of each $u\in X$ to some part $r$.
Constraints~\eqref{con:ub}--\eqref{con:lb} ensure each part has size at most $h$ and that
all but possibly one part have size at least $\lceil h/2\rceil$.

Let $F$ be the set of edges of $G$ with endpoints in different parts.
Then $G\setminus F$ has no edges between different parts, so every connected component of
$G\setminus F$ is contained within a single part and therefore has size at most $h$.
Moreover, the objective exactly counts $|F|$ by summing: crossing edges inside $C$,
crossing edges between $X$ and $C$, and crossing edges inside $X$.
Thus every feasible IQP solution yields a feasible deletion set whose size equals the
objective value.

Conversely, any partition into $p$ parts satisfying the size constraints induces values
$x_{r,S}$ and $y_{r,u}$ meeting Constraints~\eqref{con:class}--\eqref{con:lb}, and the
objective again equals the number of edges crossing between parts, i.e., the size of the
corresponding deletion set.
Therefore the IQP optimum equals the minimum feasible deletion size under the fixed guess. 
\end{claimproof}

For each guess $(p,s)$, the number of variables is
\[
N_{\mathrm{var}} = p\cdot 2^\ell + p\cdot \ell \le f(\ell)\cdot (2^\ell+\ell),
\]
which depends only on $\ell$ because $p\le f(\ell)$.
Furthermore, the maximum absolute value of any coefficient in the constraints and the
quadratic objective is $1$.
By the result of Lokshtanov~\cite{lokshtanov2017parameterizedintegerquadraticprogramming}, \textsc{Integer Quadratic Programming} is
fixed-parameter tractable when parameterized by $N_{\mathrm{var}}+\alpha$, where
$\alpha$ is the largest absolute value of a coefficient.
Thus each IQP instance can be solved in $g(\ell)\cdot |V(G)|^{O(1)}$ time for some
computable function $g$.
Since we try at most
\[
\sum_{p=1}^{f(\ell)} (p+1) \le (f(\ell)+1)^2
\]
guesses $(p,s)$, the overall running time is still of the form
$g'(\ell)\cdot |V(G)|^{O(1)}$. We accept if $\min_{p,s} \mathrm{OPT}_{p,s}\le k$ and reject otherwise.
Therefore, \textsc{$\mathcal{T}_{h+1}$-Free Edge Deletion} is fixed-parameter tractable
parameterized by $\ell$.
\end{proof}

\section{FPT Parameterized by Neighborhood Diversity and \texorpdfstring{$h$}{h}}\label{sec3}
We formulate the problem as an Integer Linear Programming problem and use an algorithm that solves parameterized minimization ILP in FPT time parameterized by number of variables. 
\begin{framed}
 \noindent   $p$-\textsc{Variable Integer Linear Programming Optimization} ($p$-\textsc{Opt}-ILP)\\
   \textbf{Input:}
matrices $A \in \mathbb{Z}^{m\times p}, b \in \mathbb{Z}^{
m \times 1}$ and $c \in \mathbb{Z}^{
1\times p}$.\\
\textbf{Output:} A vector $\textbf{x} \in \mathbb{Z}^{ 
p\times 1}$
that minimizes the objective function $c\cdot \textbf{x}$ and satisfies the $m$ inequalities,
that is, $A \cdot \textbf{x} \geq b$.
\end{framed}
\begin{lem}\label{lem4}~\cite{bib4}
    p-\textsc{Opt}-ILP can be solved using $\mathcal{O}(p^{2.5p+o(p)}
\cdot L \cdot \log (MN))$ arithmetic operations and space polynomial in $L$. Here, $L$ is the number of bits in the input, $N$ is the
maximum of the absolute values any variable can take, and $M$ is an upper bound on the
absolute value of the minimum taken by the objective function.
\end{lem}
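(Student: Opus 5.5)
Lemma~\ref{lem4} is a known result that we take from~\cite{bib4}; we only outline where its ingredients come from and do not reprove it. The plan is to reduce the optimization version to the feasibility version of integer programming in fixed dimension, and then invoke Kannan's refinement of Lenstra's algorithm. That algorithm decides whether a polyhedron $\{\mathbf{x} : A\mathbf{x}\ge b\}$ contains an integer point using $p^{\mathcal{O}(p)}$ arithmetic operations on numbers of bit-length polynomial in the input.

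\textbf{Step 1 (feasibility).} Given $A\in\mathbb{Z}^{m\times p}$ and $b$, Kannan's algorithm proceeds in three stages. It computes an ellipsoidal rounding of the polytope. It applies a lattice basis reduction to find a short direction, that is, a direction in which the polytope is thin. It then branches on the at most $p^{\mathcal{O}(1)}$ integral hyperplanes orthogonal to that direction, recursing in dimension $p-1$. The careful accounting of this recursion is what gives the $p^{2.5p+o(p)}$ count of arithmetic operations per call, with each operation on numbers whose size is bounded by a polynomial in $L$. We add the box constraints $-N\le x_i\le N$ so that the polytope is bounded. This is the source of the $\log N$ term: the encoding length, and hence the rounding step, grows by $\mathcal{O}(p\log N)$ bits.

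\textbf{Step 2 (optimization by binary search).} The optimum of $c\cdot\mathbf{x}$ is an integer in $[-M,M]$. We binary search for the smallest $t$ such that the system $A\mathbf{x}\ge b$, $-c\cdot\mathbf{x}\ge -t$ is feasible. Each probe is one feasibility call from Step~1 on an instance whose bit-length is $L+\mathcal{O}(\log M)$. There are $\mathcal{O}(\log M)$ probes, giving the factor $\log(MN)$ overall. An optimal vector is recovered from the last feasible call.

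\textbf{Space.} Polynomial space in $L$ holds because the recursion has depth $p$, and each level stores only a reduced basis and a current branch index.

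\textbf{Main obstacle.} The delicate part is Step~1: obtaining the exact exponent $2.5p$, and ensuring that intermediate numbers in the lattice reduction do not blow up across the recursion. For this I would rely on the Frank--Tardos coefficient reduction and on the published analysis of Kannan's algorithm, rather than redoing it. For our purposes only the form $f(p)\cdot \mathrm{poly}(L,\log(MN))$ matters.
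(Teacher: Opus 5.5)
Your proposal matches the paper, which gives no proof of this lemma and simply imports it from Fellows et al.; your sketch (Lenstra--Kannan/Frank--Tardos feasibility in fixed dimension, followed by binary search on the objective value) is the standard argument behind that citation. One small correction: an additive $\mathcal{O}(p\log N)$ increase in bit-length cannot by itself produce the multiplicative $\log N$ in the bound. The factor $\log(MN)$ is more naturally read as a count of feasibility calls, namely $\mathcal{O}(\log M)$ calls to locate the optimal value plus $\mathcal{O}(p\log N)$ calls if an optimal vector is recovered coordinate by coordinate by binary search over $[-N,N]$, with the extra factor $p$ absorbed into $p^{o(p)}$.
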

\begin{thm}
    $\mathcal{T}_{h+1}$-\textnormal{\textsc{Free Edge Deletion}} can be solved in  time  \[\mathcal{O}\left( f(t,h) \cdot (\log n)^2 + n + m \right)\] where $f(t,h) =t \cdot \binom{h+t}{t}^{2.5 \binom{h+t}{t} + o\left(\binom{h+t}{t}\right)}$.
\end{thm}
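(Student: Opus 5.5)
We formulate the problem as a $p$-\textsc{Opt}-ILP whose number of variables depends only on $t$ and $h$, and then invoke Lemma~\ref{lem4}. Here $t$ denotes the neighborhood diversity. First compute the type partition $T_1,\dots,T_t$ of $V(G)$ in $\mathcal{O}(n+m)$ time, using partition refinement as in the CVD section. Each $T_i$ is a clique or an independent set. Between two types, either all edges are present or none are.

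A solution $F$ yields $G\setminus F$. We may assume, as in the proof of Claim~\ref{claim:parts}, that $F$ consists exactly of the edges running between the parts of some partition of $V(G)$ into parts of size at most $h$. Take the parts to be the components of $G\setminus F$. Each component $Q$ is described by its \emph{profile}, the vector $(|Q\cap T_1|,\dots,|Q\cap T_t|)\in\mathbb{Z}_{\ge 0}^t$ with coordinates summing to at most $h$. The number of such nonzero profiles is at most $\binom{h+t}{t}$.

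Vertices of the same type are twins, so the set of edges of $G$ internal to a part depends only on the part's profile $\pi$, up to isomorphism. The number of such internal edges is
\[
e(\pi)=\sum_{i:\,T_i\text{ clique}}\binom{\pi_i}{2}+\sum_{\{i,j\}\in E_{\mathrm{type}}}\pi_i\pi_j ,
\]
where $E_{\mathrm{type}}$ is the set of adjacent pairs of types. Deleting all inter-part edges therefore costs $|E(G)|-\sum_{\text{parts }Q}e(\pi(Q))$.

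For every profile $\pi$ introduce an integer variable $z_\pi\ge 0$ counting the parts with that profile. The constraints are $\sum_\pi z_\pi\,\pi_i=|T_i|$ for each $i\in[t]$. The objective is to minimize $|E(G)|-\sum_\pi e(\pi)z_\pi$; equivalently, maximize $\sum_\pi e(\pi) z_\pi$, which is linear. We accept iff the optimum is at most $k$.

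Correctness in the forward direction: every feasible assignment yields a partition. Since twins are interchangeable, we can distribute the vertices of each type into parts according to the chosen profiles. Deleting the inter-part edges leaves components of size at most $h$ and costs exactly the objective value. In the reverse direction, any solution induces such a partition, and the partition induces a feasible $z$ whose objective is at most $|F|$. Connectivity of parts is not needed, because a part that splits into several components only helps.

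For the running time, the ILP has $p\le\binom{h+t}{t}$ variables and $t$ equality constraints (written as pairs of inequalities). The coefficients are bounded by $h^2$ and the $|T_i|\le n$. So $L=\mathcal{O}(t\binom{h+t}{t}\log n)$, and both $N$ and $M$ are at most $n^2$, giving $\log(MN)=\mathcal{O}(\log n)$. Lemma~\ref{lem4} then gives $\mathcal{O}\bigl(p^{2.5p+o(p)}\cdot t\binom{h+t}{t}\log n\cdot\log n\bigr)$ arithmetic operations. Absorbing the polynomial factor in $\binom{h+t}{t}$ into the $o(p)$ exponent yields $f(t,h)(\log n)^2$, and adding the $\mathcal{O}(n+m)$ preprocessing gives the stated bound.

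The main obstacle is this accounting step. We must justify that the bit length and the bounds $N,M$ contribute only $(\log n)^2$ and a factor $t$, with no further dependence on $n$. The equivalence between solutions and partitions via profiles, relying on twin symmetry, is routine by comparison.
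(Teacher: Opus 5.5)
Your proposal is correct and follows essentially the same route as the paper. You use one nonnegative integer variable per component profile (the paper's type vectors $\mathbf{a}\in\mathcal{A}$), the $t$ equality constraints $\sum_{\pi} z_\pi \pi_i = |T_i|$, the linear objective $|E(G)|-\sum_\pi e(\pi) z_\pi$, and the same accounting via Lemma~\ref{lem4} with $L=\mathcal{O}(t\binom{h+t}{t}\log n)$ and $\log(MN)=\mathcal{O}(\log n)$. Your explicit two-direction correctness argument, and computing the type partition by twin-class refinement instead of modular decomposition, only fill in details the paper leaves implicit.
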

\begin{proof}

Let $t$ be the neighborhood diversity of $G$ and $P=(P_1,\dots, P_t)$ be the partition of $V(G)$ into neighborhood types. By definition, each $P_i$ is either a clique or an independent set. Furthermore, for any $i \neq j$, either all vertices of $P_i$ are adjacent to all vertices of $P_j$ or no vertex of $P_i$ is adjacent to any vertex of $P_j$. Without loss of generality, assume that $P_1,\dots,P_r$ are cliques, and $P_{r+1},\dots, P_t$ are independent sets. 

We aim to minimize the size of $F \subseteq E(G)$ such that each component of $G'= G\setminus F$ has at most $h$ vertices. We first characterize the components of $G'$ by its ``type"- a vector that represents its intersection with the neighborhood classes.  Let $$\mathcal{A}= \{\mathbf{a}=(a_1, a_2,\dots, a_t) \in \mathbb{N}_0^t: 0 \leq a_i \leq |P_i|, 1 \leq \sum_{i=1}^t a_i \leq h \}.
$$ A component $C$ of $G'$ is said to be of type $\mathbf{a}$ if $|C \cap P_i|=a_i$ for all $1 \leq i \leq t$. For each $\mathbf{a} \in \mathcal{A}$, define a non-negative integer variable $x_{\mathbf{a}}$ that represents the number of components of type $\mathbf{a}$ in $G'$.
The vertices of each neighborhood type are distributed among the components of $G'$. Thus we have the following constraint: 
$$\sum_{\mathbf{a} \in \mathcal{A}} a_i x_{\mathbf{a}} = |P_i| \text{ for all } i=1,2,\dots,t$$
Now for every $1 \leq i < j \leq t$, define $$z_{ij}= \begin{cases}
     1, \text{ if there is an edge between } P_i \text{ and } P_j \\
     0, \text{ otherwise }
\end{cases}$$
Then, $$|E(G') | = \sum_{\mathbf{a} \in \mathcal{A}} x_{\mathbf{a}} \cdot \Big(\sum_{i=1}^r \binom{a_i}{2} + \sum_{1 \leq i < j \leq t}z_{ij}a_ia_j \Big)$$
Thus given below is the ILP formulation of the $\mathcal{T}_{h+1}$-\textnormal{\textsc{Free Edge Deletion}}  where the partition of $V(G)$ into neighborhood types $P=(P_1,\dots,P_t) $  is given: 
\begin{framed}
 \hspace{1cm}    Minimize $$|E(G)| - \Big[\sum_{a \in \mathcal{A}} x_a \cdot \Big(\sum_{i=1}^r \binom{a_i}{2} + \sum_{1 \leq i < j \leq t}z_{ij}a_ia_j \Big) \Big] $$ 
\hspace{1.5cm} subject to   
        $$ \sum_{a \in \mathcal{A}} a_i x_a = |P_i|   \text{  for all } i=1,2,\dots,t$$
\end{framed}

For a given instance of $\mathcal{T}_{h+1}$-\textnormal{\textsc{Free Edge Deletion}} , we can compute the neighborhood types of the graph in
linear time using fast modular decomposition algorithms~\cite{bib10}. The ILP formulation comprises at most $\binom{h+t}{t}$ variables and $t$ constraints. Note that the values of $z_{ij}$ and $|P_i|$ are calculated in linear time alongside the neighborhood partition. So the construction of the ILP instance takes $\mathcal{O}(t \cdot \binom{h+t}{t})$ time. The value of objective function is bounded by $n^2$ and the value that can be taken by any variable is bounded by $n$. Also, the ILP can be represented using at most $\mathcal{O}(t \cdot \binom{h+t}{t} \cdot \log n )$ bits. Therefore, by Lemma \ref{lem4}, the ILP can be solved in time 
$$\mathcal{O}\left( t \cdot \binom{h+t}{t}^{2.5\binom{h+t}{t}+o(\binom{h+t}{t})} \cdot  (\log n)^2 \right)$$

Hence the $\mathcal{T}_{h+1}$-\textnormal{\textsc{Free Edge Deletion}}  problem can be solved in time $$\mathcal{O}\left( f(t,h) \cdot (\log n)^2 + n + m \right)$$
where $$f(t,h) =t \cdot \binom{h+t}{t}^{2.5 \binom{h+t}{t} + o\left(\binom{h+t}{t}\right)} $$
\end{proof}

\section{An FPT Approximation Algorithm Parameterized by Solution Size}\label{sec8}

Given the W[1]-hardness of the problem parameterized by $k$~\cite{bib11}, it is natural to ask whether the problem admits an efficient approximation.
In this section, we present a parameterized approximation algorithm that returns a solution of size at most $4k^2$. 
We utilize the (FPT) algorithm for \textsc{Minimum Bisection} by Cygan, Lokshtanov, Pilipczuk, Pilipczuk and Saurabh ~\cite{bib8} to iteratively prune "giant" components from the graph. 
They presented an algorithm that, given a connected graph $G$ and an integer $k$, can find a subset $A \subseteq V(G)$ such that $|A| = t$ and $|E(A, V(G) \setminus A)| \leq k$ for every integer $t \in \{1, \dots, n\}$ (or correctly report that no such subset exists).
Note that the algorithm computes these solutions for all possible values of $t$ simultaneously in a single execution taking $\mathcal{O}(2^{\mathcal{O}(k^3)} n^3 \log ^3n)$ time.

\begin{thm}
There exists an algorithm that, given an instance $(G,k,h)$ of
\textsc{$\mathcal{T}_{h+1}$-Free Edge Deletion}, runs in time
$\mathcal{O}(2^{\mathcal{O}(k^3)} n^3 \log^3 n)$ and satisfies the following:
\begin{itemize}
    \item if $(G,k,h)$ is a yes-instance, then the algorithm returns
    a feasible solution $F'$ with $|F'| \leq 4k^2$;
    \item if $(G,k,h)$ is a no-instance, then the algorithm may either
    correctly conclude that no solution of size at most $k$ exists or return
    a feasible solution $F'$ with $|F'| \leq 4k^2$.
\end{itemize}
\end{thm}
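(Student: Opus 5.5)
The plan is to repeatedly split off pieces of size at most $h$ from the current graph, using the \textsc{Minimum Bisection}-type algorithm of Cygan et al.~\cite{bib8} as a subroutine. Throughout, we maintain a set $F'$ of deleted edges, initially empty, and we work on the connected components of $G\setminus F'$. A component of size at most $h$ is final and is never touched again. Let $D$ be a component with $|D|>h$. We run the algorithm of~\cite{bib8} on $G[D]$ with budget $k$, and obtain, for every $t\in\{1,\dots,|D|\}$, either a set $A\subseteq D$ with $|A|=t$ and $|E(A,D\setminus A)|\le k$, or a report that none exists.

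The key structural observation is the following. Suppose $F$ is a solution of size at most $k$ for $G$. Then $F\cap E(G[D])$ is a solution for $G[D]$, so $D$ splits into components of $G[D]\setminus F$ of size at most $h$. At most $k$ deleted edges touch $D$, so at most $k+1$ of these components are incident to deleted edges; all others are components of $G[D]$ itself, and there are none of those because $G[D]$ is connected. Hence $D$ consists of at most $k+1$ components, and in particular $|D|\le (k+1)h$.

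From these components we want a set of size between $\lceil h/2\rceil$ and $h$ that is cheap to cut. Take any single component $K$ of $G[D]\setminus F$. Its boundary in $G[D]$ consists of edges of $F$, so it has at most $k$ edges. If $|K|$ is too small, we grow it by adding further components, taken in the order of a BFS in the quotient graph so the union stays connected, until the size first lies in $[\lceil h/2\rceil,h]$. This is possible because each component has size at most $h$: when the union drops below $h/2$, adding one more component of size at most $h$ can overshoot $h$. To handle that overshoot, we instead accept any union of components whose total size is at least $\lceil h/2\rceil$ and at most $h$, or else a single component of size larger than $h/2$. In either case the cut is at most $|F\cap E(G[D])|\le k$. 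So some $t\in[\lceil h/2\rceil,h]$ admits a set $A$ with $|E(A,D\setminus A)|\le k$.

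The algorithm therefore scans $t$ from $h$ down to $\lceil h/2\rceil$ and picks any returned $A$. If none is returned, we correctly answer that $(G,k,h)$ is a no-instance. Otherwise we add $E(A,D\setminus A)$ to $F'$ and recurse on the components of $G[A]$ and $G[D\setminus A]$. Components inside $A$ are final. Each round removes at least $h/2$ vertices from a large component while cutting at most $k$ edges. The total number of vertices lying in large components is at most $(k+1)h$ summed over the initial large components, so this bound must be recharged per component: before starting, we also reject if the number of initial large components exceeds $k$ or if some component has more than $(k+1)h$ vertices. The number of rounds is then at most about $\sum_D 2|D|/h\le 2\cdot 2k$ with care. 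This gives $|F'|\le 4k\cdot k=4k^2$, and the running time is a polynomial number of calls to~\cite{bib8}.

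The main obstacle is the counting step in the last paragraph. Two points need checking. First, the sum over all large components of $|D|$ must be at most $2kh$; this should follow since each large component of $G$ needs its own deleted edges, giving $\sum_D (\#\text{pieces})\le k+\#D\le 2k$. Second, one must verify that the residual subproblems $G[D\setminus A]$ still satisfy the no-instance test. They need not, but we only need correctness of rejection on the original graph. So I would not re-run the rejection tests on residual pieces and would instead charge rounds against the original bound, stopping with a no-answer if the round count exceeds $4k$. Getting $A$ of size at least $h/2$ whenever the original instance is a yes-instance is the other delicate point. Here I would use that the restriction of $F$ to the current component is still a solution, so the argument of the previous paragraph applies to every current component.
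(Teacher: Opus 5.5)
Your proposal is correct and is essentially the paper's algorithm: you repeatedly call the Cygan et al.\ subroutine on an oversized component to peel off a set of size in $[\lceil h/2\rceil,h]$ with at most $k$ cut edges. You then charge the at most $4k$ rounds against a bound of $2kh$ on the vertices lying in oversized components, and the one real difference is that you derive this $2kh$ bound directly (at most $k$ oversized components, each splitting into at most $|F\cap E(G[D])|+1$ pieces) rather than citing the Gaikwad--Maity kernel.

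Your write-up of why $A$ exists is muddled: the overshoot sentence contradicts itself, and connectivity of the union is unnecessary. The correct case split is the one you end with. Either some piece already has size at least $\lceil h/2\rceil$, or all pieces have size at most $\lceil h/2\rceil-1$, so any greedy union first reaching $\lceil h/2\rceil$ has size at most $h-1$. Finally, your cap of $4k$ rounds is what gives only $O(k)$ calls to the subroutine, and that is what the stated running time needs; ``polynomially many calls'' would not suffice.
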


\begin{proof}
First, we apply the kernelization algorithm by Gaikwad and Maity~\cite{bib2} which runs in linear time in the input size. This either returns a no-instance or gives a reduced instance with at most $2kh$ vertices. We proceed to run the following procedure on the reduced instance: 
Initialize $F'=\emptyset$. While there exists a connected component of size greater than $h$, run the algorithm by Cygan et al.~\cite{bib8} on $C$. If the algorithm finds a set $A \subseteq V(C)$ with $|A|=t$ for some $h/2 \leq t \leq h$ and $|E(A, V(G) \setminus A)| \leq k$, then add the edges $E(A, V(G) \setminus A)$ to $F'$, remove $A$ from $G$, and continue to the next iteration of the while loop. If the algorithm fails to find such a set for all $t\in [h/2,h]$, return no-instance. This is because if $(G, k, h)$ is a yes-instance, any component $C$ of size greater than $h$ must contain a subset $A$ of size at most $ h$ that can be separated by at most $k$ edges. If $|A| \geq h/2$, the algorithm will return $A$, and if all such $A$ has $|A| \leq h/2$, then we can merge some of them too get the size greater than $h/2$. 

In every successful iteration, we identify and remove a vertex set $A$ of size at least $h/2$. Since we start with the kernelized graph where $|V(G)| \leq 2kh$, the maximum number of iterations $R$ is bounded by:
$$R \le \frac{|V(G)|}{h/2} \leq\frac{2kh}{h/2} = 4k$$
If all iterations were successful, we return $F'$. Since we add at most $k$ edges to $F'$ in each iteration, the total size of the returned solution is bounded by:
$$|F'| \le R \cdot k \le 4k^2$$

 The algorithm by Cygan et al. runs in $\mathcal{O}(2^{\mathcal{O}(k^3)} n^3 \log ^3n)$ time. Since the loop calls for this algorithm at most $4k$ times, the total running time is $\mathcal{O}(2^{\mathcal{O}(k^3)} k n^3 \log ^3n)$.
    
\end{proof}

\noindent The natural generalization of this problem to directed graphs in this context would be to consider whether it is possible to delete at most $k$ arcs from a given directed graph so that the maximum number of vertices reachable from any given starting vertex is at most $h$.
 \vspace{3mm}
    \\
    \fbox
    {\begin{minipage}{33.7em}\label{BSP2}
       {\sc $\mathcal{T}_{h+1}$-Free  Arc Deletion}\\
        \noindent{\bf Input:} A directed graph $G=(V,E)$, and two positive integers $k$ and $h$.\\
    \noindent{\bf Question:} Does there exist $E'\subseteq E(G)$ with 
    $|E^{\prime}|\leq k$ such that the maximum number of vertices reachable from any given starting vertex is at most $h$ in     
    $G\setminus E^{\prime}$?  
    \end{minipage} }\\
 \vspace{3mm}   

Enright and Meeks  \cite{bib1} explained the importance of studying {\sc $\mathcal{T}_{h+1}$-Free Arc Deletion}. 
A directed acyclic graph (DAG) is a directed graph with no directed cycles. One natural problem mentioned in \cite{bib1} is to consider whether there exists an FPT algorithm to solve this problem on directed acyclic graphs. We show that the problem is W[2]-hard  parameterized by the solution size $k$, even when restricted to directed acyclic graphs (DAG). We prove this result via a reduction from 
{\sc Hitting Set}. In the {\sc Hitting Set} problem we are given as input a family $\mathcal{F}$ over a universe $U$, together with an integer $k$, and the objective is to determine whether there is a set $B \subseteq U$ of size at most $k$ such that $B$ has nonempty intersection with all sets in $\mathcal{F}$. It is proved in \cite{marekcygan} (Theorem 13.28) that {\sc Hitting Set} problem is W[2]-hard  parameterized by the solution size.

\begin{thm}\label{theorem-W[2]}
The {\sc $\mathcal{T}_{h+1}$-Free Arc Deletion} problem is W[2]-hard  parameterized by the solution size $k$, even when restricted to directed acyclic graphs.
\end{thm}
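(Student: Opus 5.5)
We reduce from \textsc{Hitting Set}, which is W[2]-hard parameterized by the solution size (\cite{marekcygan}, Theorem 13.28). The reduction keeps the budget $k$ unchanged and produces a DAG of depth three. Each element $u\in U$ becomes a \emph{costly arc}: deleting it cuts off a large out-star. Each set $F_j$ becomes a vertex $w_j$ whose reachable set is larger than $h$ by exactly the size of one such star. The reachability bound at $w_j$ can then be met only by deleting the costly arc of some element of $F_j$.

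\emph{Construction.} Let $(U,\mathcal{F}=\{F_1,\dots,F_m\},k)$ be the \textsc{Hitting Set} instance, with $d_j:=|F_j|\ge 1$, and put $L:=k$ and $D:=\max_j d_j$.
\begin{enumerate}
\item For each $u\in U$, create vertices $x_u,y_u$, an arc $x_u\to y_u$, and $L$ fresh leaves $\ell_{u,1},\dots,\ell_{u,L}$ with arcs $y_u\to \ell_{u,i}$.
\item Set $h:=D(L+2)+L+2$.
\item For each $F_j$, create a vertex $w_j$ with arcs $w_j\to x_u$ for every $u\in F_j$.
\item Give each $w_j$ exactly $p_j:=h+L+1-1-d_j(L+2)$ fresh private leaves, each joined by an arc from $w_j$. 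The choice of $h$ makes $p_j\ge 0$.
\end{enumerate}
All arcs go from $w$'s to $x$'s to $y$'s to leaves (or from $w$'s directly to leaves), so the graph is a DAG.

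\emph{Reachability sizes.} The out-stars are disjoint, so the number of vertices reachable from $w_j$ is $1+p_j+d_j(L+2)=h+L+1$. Every other vertex reaches at most $L+2\le h$ vertices.

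\emph{Forward direction.} Let $B$ be a hitting set with $|B|\le k$, and delete the arcs $\{x_u\to y_u : u\in B\}$. Each $w_j$ loses at least one star $\{y_u\}\cup\{\ell_{u,i}\}$ of size $L+1$, since the stars are disjoint and $y_u$ is reachable only through $x_u$. Hence every reach is at most $h$.

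\emph{Backward direction.} Let $E'$ be a solution with $|E'|\le k$. We normalize $E'$ in two steps.
\begin{itemize}
\item Replace each deleted arc $w_j\to x_u$ by $x_u\to y_u$. This keeps the budget. For $w_j$ it removes all of the star except $x_u$ itself, i.e.\ a loss of $L+1$ instead of $L+2$, which still suffices. For every other $w_{j'}$, only losing the arc $x_u\to y_u$ can help.
\item Consider any $w_j$ such that no arc $x_u\to y_u$ with $u\in F_j$ lies in $E'$. The only other arcs on paths out of $w_j$ are leaf arcs, namely $y_u\to\ell_{u,i}$ and the padding arcs. Each of these reduces the reach of $w_j$ by exactly one. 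So the reach of $w_j$ is at least $h+L+1-k=h+1>h$, a contradiction.
\end{itemize}
Hence for every $j$ some $u\in F_j$ has $x_u\to y_u\in E'$. The set of such $u$ hits every $F_j$ and has size at most $k$.

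The main obstacle is this exchange and counting argument. Deleting $w_j\to x_u$ removes $L+2$ vertices rather than $L+1$, so the replacement must be checked to keep $w_j$ feasible; the slack of exactly $L+1$ in $p_j$ was chosen for this. The cheap leaf deletions must also be bounded, which is why $L=k$ is needed.

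The reduction is polynomial and preserves $k$, which gives W[2]-hardness on DAGs.
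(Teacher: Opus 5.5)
Your proof is correct and follows the same route as the paper: a reduction from \textsc{Hitting Set} in which each element becomes an arc into a large out-star, each set becomes a padded vertex whose reach just exceeds $h$, and any solution is normalized onto the element arcs. The paper uses $h=n^{c}$ and an exchange argument that never explicitly handles deletions of the star arcs $(v'_x,w)$; your calibrated slack ($L=k$) combined with directly counting leaf deletions is tighter. Note that the strict inequality $h+L+1-k>h$ depends on counting $w_j$ among its own reachable vertices, and taking $L=k+1$ would make the argument independent of that convention.
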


\begin{proof}
Let $I=(U,\mathcal{F},k)$ be an instance of {\sc Hitting Set} where $U=\{x_1,x_2,\ldots,x_n\}$. We create an instance $I'=(G',k',h)$ of {\sc $\mathcal{T}_{h+1}$-Free Arc Deletion} the following way.
For every $x\in U$, create two vertices $v_x$ and $v'_x$ and  add a directed edge $(v_x,v'_x)$.
For every $F\in \mathcal{F}$, create one vertex $v_F$. 
Next, we add a directed edge $(v_F,v_x)$ if and only if $x\in F$. 
 For each  $x\in U$,
we add a set $V_x$ of $\frac{h}{n}$ many new vertices and 
 add a directed edge from $v'_x$  to
every vertex of $V_x$.
 We specify the value of $h$ at the end of the construction.
For each vertex $F\in \mathcal{F}$, we add a set $V_{F}$ of 
$(h+1)- \sum\limits_{x\in F} |V_x|$ new vertices and add a directed edge from
$v_F$ to 
every vertex of $V_{F}$. 
 Finally, we set $k'=k$ and  $h=n^{c}$ for some large constant $c$. 
This completes the construction of $G'$. Next, we show that $I$ and $I'$ are equivalent instances.

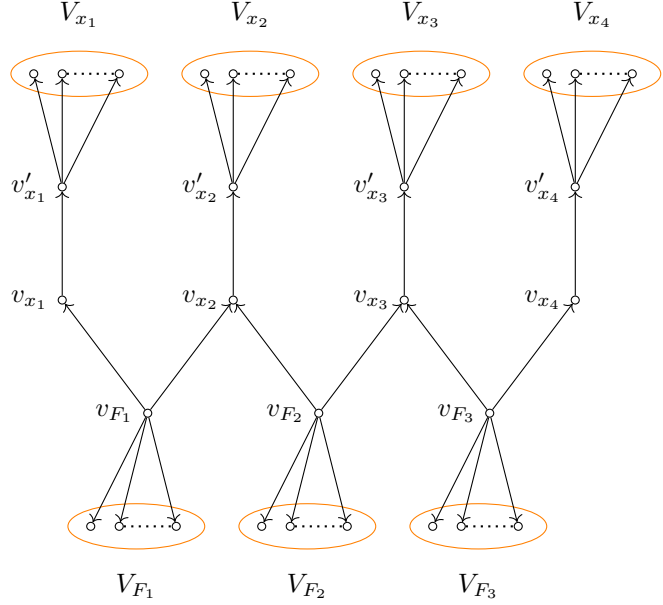
\begin{figure}[t]
\centering
\begin{tikzpicture}[scale=1.5]\centering

\vertex (vx1) at (0, 0) [label=left:$v_{x_{1}}$]{};  

\vertex (vx11) at (-0.25, 2) [label=left:]{}; 
\vertex (vx12) at (0, 2) [label=left:]{}; 
\vertex (vx13) at (.5, 2) [label=left:]{};

\draw[thick, dotted] (vx12)--(vx13);

\draw [orange](0.15,2) ellipse (.6 cm and .2cm);

\node (1) at (0.15, 2.25) [label=above:$V_{x_{1}}$]{};
\node (1) at (-2, 0) [label=above:]{};

\vertex (vx2) at (1.5, 0) [label=left:$v_{x_{2}}$]{};

\vertex (vx21) at (1.25, 2) [label=left:]{}; 
\vertex (vx22) at (1.5, 2) [label=left:]{}; 
\vertex (vx23) at (2, 2) [label=left:]{};

\draw[thick, dotted] (vx22)--(vx23);

\draw [orange](1.65,2) ellipse (.6 cm and .2cm);

\node (2) at (1.65, 2.25) [label=above:$V_{x_{2}}$]{};

\vertex (vx3) at (3, 0) [label=left:$v_{x_{3}}$]{};

\vertex (vx31) at (2.75, 2) [label=left:]{}; 
\vertex (vx32) at (3, 2) [label=left:]{}; 
\vertex (vx33) at (3.5, 2) [label=left:]{};

\draw[thick, dotted] (vx32)--(vx33);

\draw [orange](3.15,2) ellipse (.6 cm and .2cm);

\node (3) at (3.15, 2.25) [label=above:$V_{x_{3}}$]{};

\vertex (vx4) at (4.5, 0) [label=left:$v_{x_{4}}$]{};

\vertex (vx41) at (4.25, 2) [label=left:]{}; 
\vertex (vx42) at (4.5, 2) [label=left:]{}; 
\vertex (vx43) at (5, 2) [label=left:]{};

\draw[thick, dotted] (vx42)--(vx43);

\draw [orange](4.65,2) ellipse (.6 cm and .2cm);

\node (4) at (4.65, 2.25) [label=above:$V_{x_{4}}$]{};

\vertex (vf1) at (0.75, -1) [label=left:$v_{F_{1}}$]{};

\vertex (vf11) at (0.25, -2) [label=left:]{}; 
\vertex (vf12) at (0.5, -2) [label=left:]{}; 
\vertex (vf13) at (1, -2) [label=left:]{}; 

\draw [orange](0.65,-2) ellipse (.6 cm and .2cm);

\node (1) at (0.65, -2.25) [label=below:$V_{F_{1}}$]{};

\draw[thick, dotted] (vf12)--(vf13);

\vertex (vf2) at (2.25, -1) [label=left:$v_{F_{2}}$]{};  

\vertex (vf21) at (1.75, -2) [label=left:]{}; 
\vertex (vf22) at (2, -2) [label=left:]{}; 
\vertex (vf23) at (2.5, -2) [label=left:]{}; 

\node (2) at (2.15, -2.25) [label=below:$V_{F_{2}}$]{};

\draw [orange](2.15,-2) ellipse (.6 cm and .2cm);

\draw[thick, dotted] (vf22)--(vf23);

\vertex (vf3) at (3.75, -1) [label=left:$v_{F_{3}}$]{};  

\vertex (vf31) at (3.25, -2) [label=left:]{}; 
\vertex (vf32) at (3.5, -2) [label=left:]{}; 
\vertex (vf33) at (4, -2) [label=left:]{};

\draw [orange](3.65,-2) ellipse (.6 cm and .2cm);
   
\node (3) at (3.65, -2.25) [label=below:$V_{F_{3}}$]{};

\draw[thick, dotted] (vf32)--(vf33);

\vertex (vx01) at (0, 1) [label=left:$v^{\prime}_{x_{1}}$]{};  
\vertex (vx02) at (1.5, 1) [label=left:$v^{\prime}_{x_{2}}$]{};
\vertex (vx03) at (3, 1) [label=left:$v^{\prime}_{x_{3}}$]{};
\vertex (vx04) at (4.5, 1) [label=left:$v^{\prime}_{x_{4}}$]{};

\draw[->] (vx1)--(vx01);
\draw[->] (vx2)--(vx02);
\draw[->] (vx3)--(vx03);
\draw[->] (vx4)--(vx04);

\draw[->] (vf1)--(vx1);
\draw[->] (vf1)--(vx2);
\draw[->] (vf2)--(vx2);
\draw[->] (vf2)--(vx3);
\draw[->] (vf3)--(vx3);
\draw[->] (vf3)--(vx4);

\draw[->] (vf1)--(vf11);
\draw[->] (vf1)--(vf12);
\draw[->] (vf1)--(vf13);

\draw[->] (vf2)--(vf21);
\draw[->] (vf2)--(vf22);
\draw[->] (vf2)--(vf23);

\draw[->] (vf3)--(vf31);
\draw[->] (vf3)--(vf32);
\draw[->] (vf3)--(vf33);

\draw[->] (vx01) --(vx11);
\draw[->] (vx01) --(vx12);
\draw[->] (vx01) --(vx13);

\draw[->] (vx02) --(vx21);
\draw[->] (vx02) --(vx22);
\draw[->] (vx02) --(vx23);

\draw[->] (vx03) --(vx31);
\draw[->] (vx03) --(vx32);
\draw[->] (vx03) --(vx33);

\draw[->] (vx04) --(vx41);
\draw[->] (vx04) --(vx42);
\draw[->] (vx04) --(vx43);

\node (empty) at (50.65, 2.25) [label=above:$ $]{};

     \end{tikzpicture}
\caption{The graph in the proof of Theorem \ref{theorem-W[2]} constructed from {\sc Hitting Set} instance $U=\{x_{1},x_{2},x_{3},x_{4}\}$, $F= \{ \{x_{1},x_{2}\},\{x_{2},x_{3}\},\{x_{3},x_{4}\}\}$ and $k=2$.}
\end{figure}

\par Let us assume that there exists a subset $S\subseteq U$ such that $|S|\leq k$ and $S \cap F \neq \emptyset$ for all $F \in \mathcal{F}$. 
We claim that every vertex in $\widetilde{G'}= G' \setminus \bigcup\limits_{x\in S} (v_x,v_x^{\prime})$ can reach at most $h$ vertices. Let us assume that there exists a 
vertex in $\widetilde{G'}$ which can reach more than $h$ vertices. 
Clearly that vertex must be $v_{F}$ for some $F\in \mathcal{F}$.
Without loss of generality assume that $x_1 \in S \cap F$. 
As we have removed the edge $(v_{x_1},v_{x_1}^{\prime})$ from $G'$, clearly $v_{F}$ cannot reach any vertex in $V_{x_1}$. 
Note that in such a case $v_{F}$ cannot reach  more than $h$ vertices as $h=n^{c}$ for some large constant. In particular, $v_{F}$ can reach  at most $h+1 - (\sum\limits_{x \in F} |V_x|) + (\sum\limits_{x \in F \setminus \{x_1\}} |V_x|) < h$ vertices.
\par In the other direction, let us assume that there exists a set $E'\subseteq E(G')$ such that $|E'|\leq k$ and every vertex in $\widetilde{G'}= G' \setminus E'$ can reach at most $h$ vertices. First we show that, given a solution $E'$ we can construct another solution $E''$ such that $E'' \subseteq \bigcup\limits_{x\in U} (v_x,v_x^{\prime})$ and $|E''|\leq |E'|$. 
To do this, we observe that the only vertices that can possibly reach more than $h$ 
vertices are  $v_{F}$. Note that if $E'$ contains an edge of the form $(v_{F},u)$ for some $u\in V_{F}$ then we can replace it by an arbitrary edge 
$(v_x,v_x^{\prime})$ for some $x\in F$. This will allow us to disconnect at least $\frac{h}{n}$ vertices from $v_{F}$ rather than just 1. 
Similar observation can be made for edges of type $(v_{F},v_x)$ for some $x\in F$ by replacing it with  edge $(v_x,v'_x)$. Therefore, we can assume that $E'' \subseteq \bigcup\limits_{x\in U}^{n} (v_x,v'_x)$. 
Next, we show that if there exists a vertex $v_{F}$ such that for every 
$x\in F$ we have $(v_x,v'_x) \not\in E''$ then 
$v_{F}$ can reach $h+1$ vertices. Clearly, $v_{F}$ can reach $V_{F}$, $\{v_x~|~ x\in F\}$ and also $\{V_x~|~ x\in F\}$. Due to construction, this set is of size more than $h$. 
This implies that for every $F\in \mathcal{F}$,  there exists an edge $(v_x,v'_x)$ for some $x\in F$ which is included in $E''$. As $|E''|\leq k$, we can define $S=\{x ~|~ (v_x,v'_x)\in E''\}$. Due to earlier observations, $S$ is a hitting set of size at most $k$.
\end{proof}

\section{ \texorpdfstring{$\mathcal{T}_{h+1}$}{T{h+1}}\textnormal{\textsc{-Free Edge Deletion}} on Split Graphs}\label{sec6}

\begin{defn}
    A graph $G$ is said to be a \emph{split graph} if $V(G)$ can be partitioned into two sets $V_1$ and $V_2$ such that $G[V_1]$ is a clique and $G[V_2]$ is an independent set.
\end{defn}

\begin{thm}\label{split} 
    The $\mathcal{T}_{h+1}$-\textnormal{\textsc{Free Edge Deletion}}  problem is NP-complete even when restricted to split graphs.
\end{thm}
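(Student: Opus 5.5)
Membership in NP is immediate: a set $F$ of at most $k$ edges is a certificate, and checking that every component of $G\setminus F$ has at most $h$ vertices takes linear time. For hardness, I would reduce from a bin-packing/partition problem that stays strongly NP-hard. The natural choice is \textsc{3-Partition}: given $3m$ integers $b_1,\dots,b_{3m}$ in unary with $B/4<b_i<B/2$ and $\sum b_i=mB$, decide whether they split into $m$ triples, each summing to $B$. This is the unary bin packing problem of Section~\ref{sec4} with a fixed number of items per bin. The structure to mimic is the proof of Theorem~\ref{thm:distF}, where $X=\{v_1,\dots,v_t\}$ plays the role of bins and the gadgets $A_i$ play the role of items.

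The goal is to realise this inside a split graph. I would let the clique side $V_1$ consist of $m$ bin vertices $v_1,\dots,v_m$ together with, for each item $i$, a set $K_i$ of $b_i$ ``item'' vertices. The independent side $V_2$ would hold pendant-like gadgets. For each bin $v_j$, attach a set $H_j$ of $h'$ independent vertices adjacent only to $v_j$; this is a star of leaves and keeps the graph split. With $h'$ large, $v_j$ is heavy, so no component may contain two bin vertices. The difficulty is that the clique forces huge numbers of edges among item vertices of different items. I therefore want a cost accounting in which the count of deleted edges is the same for every valid packing and strictly larger otherwise.

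My plan for the accounting is as follows.
\begin{itemize}
\item Every final component has the form $\{v_j\}\cup H_j\cup(\text{some clique vertices})$, or is a set of clique vertices cut off entirely.
\item Cutting a clique of size $M$ into parts of sizes $s_1,\dots,s_p$ deletes $\binom{M}{2}-\sum\binom{s_r}{2}$ edges. Among partitions with the same number of parts and fixed part sizes, this cost is therefore determined by the part sizes alone.
\item If each bin component must have exactly $B$ clique vertices besides $v_j$, then every valid solution has the same cost. This holds because the pendant edges are never cut in a good solution, and the $X$-to-clique edges are all inside the clique.
\item To force equal part sizes, set $h=1+h'+B$ and $k=\binom{M}{2}-m\binom{B+1}{2}$, where $M=m+mB$. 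Any other partition into parts of size at most $B+1$ on the clique deletes strictly more edges, by convexity of $\binom{s}{2}$.
\end{itemize}

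This, however, makes the items irrelevant, because the clique vertices are interchangeable. To encode items I would give each item's vertices distinguishing neighbors in $V_2$. For item $i$, add independent vertices $w_i^1,\dots,w_i^{L}$, each adjacent to all of $K_i$. Separating vertices of $K_i$ from one another, or from the $w_i$'s, then costs an extra amount proportional to $L$. Choosing $L$ polynomially large forces each $K_i$, together with its $w$'s, to lie in a single component. The $w$-vertices also enlarge item sizes, so I would scale $h$ accordingly, using item size $b_i(1+L')$, or simply absorb the $w$'s into the item weight.

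The main obstacle is making the budget $k$ tight. The budget must allow exactly the packings in which whole items (clique part plus private independent part) are grouped with bins, while every split item or overfull bin exceeds $k$. I would first compute the exact cost of a ``perfect'' solution; it is independent of the triple choice because all bin parts have identical size. I would then show, by a convexity/exchange argument in the style of Lemma~\ref{lem1}, that any deviation (an item split across components, a bin part smaller than the others, or two bins merged) strictly increases the cut. This is the delicate calculation where the parameters $h'$ and $L$ must be tuned.
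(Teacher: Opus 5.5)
There is a genuine gap in your fourth bullet. The convexity argument only compares partitions whose clique parts have at most $B+1$ vertices. That bound holds only for parts containing a bin vertex, which must also host $H_j$. Your first bullet allows components that avoid all bin vertices, but the accounting then drops them, and such a component is limited only by $h$. The dominant term $\binom{M}{2}-\sum_r\binom{s_r}{2}$ strictly decreases when clique parts merge. So cutting the bin vertices out of the clique and pooling items in bin-free components is cheaper, and heavy bins are exactly what make such components fit.

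Concretely, every choice of $h'$ falls into one of two cases:
\begin{itemize}
\item If $h'\le B+3L-1$, then $\{v_i,v_j\}\cup H_i\cup H_j$ fits in one component, which breaks your heaviness claim.
\item If $h'\ge B+3L-1$, then $h=1+h'+B+3L\ge 2B+6L$.
\end{itemize}
In the second case take $B=100$, $m=2q$, and $q$ copies of the sizes $26,26,26,40,40,42$. No three of these values sum to $100$, so this is a no-instance. Now delete every clique edge at a bin vertex, and put each copy of the six items, with their $w$-vertices, into one component. This yields components $\{v_j\}\cup H_j$ of size $1+h'\le h$ and $q$ components of size $2B+6L\le h$. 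The total cost is $\binom{M}{2}-q\binom{2B}{2}<\binom{M}{2}-m\binom{B+1}{2}=k$, since $2B-1>B+1$; for $m=2$ this reads $4B+1<(B+1)^2$. The same computation works if the $w$-vertices are scaled with $b_i$. So your reduction maps no-instances to yes-instances. Moreover, each pendant leaf costs only one edge and your $k$ is $\Theta(m^2B^2)$. Making bins heavy against leaf deletion would therefore force $h'\gtrsim k$, which makes pooling even easier.

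The missing idea, which the paper uses, is to keep the clique side small and put the item mass on the independent side, adjacent to \emph{every} bin vertex as in Theorem~\ref{thm:distF}. With $n$ items and $t$ bins, item $i$ becomes an independent set $A_i$ of $\alpha a_i$ vertices ($\alpha=n^{100}$) complete to $X$. It is glued together by two clique vertices $B_i$ complete to $A_i$. The clique is only $X\cup\bigcup_i B_i$, so cutting it costs at most $\binom{t}{2}+2n(t-1)+2n(n-1)$. Every vertex of $A$ must lose $t-1$ edges to $X$ in any solution, so the main cost $\alpha a(t-1)$ is identical for every packing. Any item vertex left away from all bins, or away from its $B_i$, costs at least one further edge. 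Setting $k$ to the sum leaves a slack of only $2n(n-1)$, so at most that many item vertices can be misplaced, and no whole item can be pooled away from the bins. Choosing $h=4(\alpha C+n+k)$ makes bins heavy even against leaf deletion. The factor $\alpha$ then absorbs the $O(n^2)$ slack in the final capacity inequality. That is the tight accounting you were hoping to get by tuning $h'$ and $L$.
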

\begin{proof}
     The proof is by a reduction from \textsc{Unary Bin Packing}. Let $n$ items of sizes $a_1,\dots,a_n$, and  $t$ bins of capacity $C$. 
     We construct a new instance $I'=(G,k,h)$ of $\mathcal{T}_{h+1}$-\textnormal{\textsc{Free Edge Deletion}}  as follows (see figure \ref{fig:split}). 
     Let $\alpha := n^{100}$, $k := \alpha (t-1)a + 2n(n-1) + 2n(t-1) + \binom{t}{2}$,
$h := 4(\alpha C + n + k)$, and $h' := h - (\alpha C + 2n + 1)$. Now we construct a graph $G$ in the following way:
\begin{itemize}
    \item Construct a set of vertices $X=\{v_1,\dots,v_t\}$ corresponding to the $t$ bins. 
    \item For each $j \in [t]$, add a set $P_j$ of $h'$ vertices that are adjacent to $v_j$.
    \item For each item $i \in [n]$, construct a set $A_i$ of $\alpha a_i$ vertices and a set $B_i$ of $2$ vertices. Let $A= \cup_{i=1}^n A_i$.
    \item For every $i\in [n]$, make both the vertices of $B_i$ adjacent to every vertex of $A_i$.
    \item Make each vertex of $A$ adjacent to every of $X$.
    \item Make $X  \cup (\cup_{i=1}^n B_i)$ a clique.
\end{itemize}
This construction can be done in time polynomial in the input size.
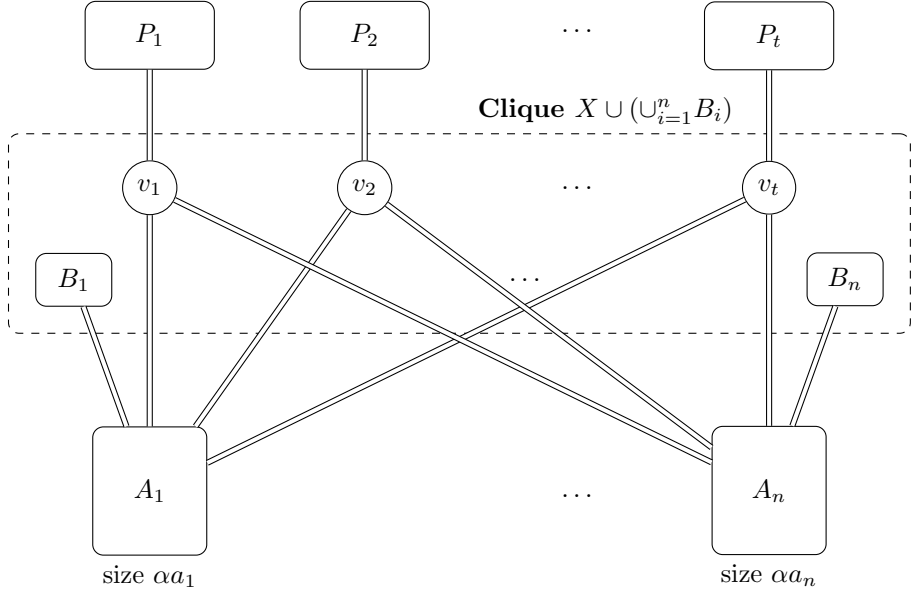
\begin{figure}[h!]
    \centering
    \begin{tikzpicture}[
    node distance=1.8cm and 2.1cm,
    v_node/.style={draw, circle, minimum size=0.7cm},
    h_node/.style={draw, rectangle, rounded corners, minimum width=1.7cm, minimum height=0.9cm},
    b_node/.style={draw, rectangle, rounded corners, minimum width=1cm, minimum height=0.7cm},
    c_node/.style={draw, rectangle, rounded corners, minimum width=1.5cm, minimum height=1.7cm},
    bundle_edge/.style={double, double distance=1.5pt, -}    
]

\node[v_node] (v1) {$v_1$};
\node[v_node] (v2) [right=of v1] {$v_2$};
\node (v_dots) [right=of v2, draw=none] {$\dots$};
\node[v_node] (vt) [right=of v2, xshift=2.5cm] {$v_t$};
\node[b_node] (B1) [below=of v1, xshift=-1cm, yshift= 1.3cm]{$B_1$};
\node (b_dots) [right=of B1, xshift=3cm,draw=none] {$\dots$};
\node[b_node] (Bn) [below=of vt, xshift=1cm, yshift= 1.3cm]{$B_n$};

\node[draw, dashed, rounded corners, fit=(v1) (v2) (v_dots) (vt) (B1) (Bn), label=85: \textbf{Clique $X \cup ( \cup_{i=1}^n B_i)$}, inner sep=10pt] (X_box) {};

\node[h_node] (H1) [above=of v1, yshift= -0.6cm] {$P_1$};
\node[h_node] (H2) [above=of v2, yshift=-0.6cm] {$P_2$};
\node (h_dots) [above=of v_dots, draw=none] {$\dots$};
\node[h_node] (Ht) [above=of vt, yshift= -0.6cm] {$P_t$};

\node[c_node] (C1) [below=of v1, yshift=-1cm, label=below:{ size $\alpha a_1$}] {$A_1$};

\node (c_dots) [below=of v_dots, draw=none, yshift=-2cm] {$\dots$};

\node[c_node] (Cn) [below=of vt, yshift=-1cm, label=below:{ size $\alpha a_n$}] {$A_n$};

\draw[bundle_edge] (v1) -- (H1);
\draw[bundle_edge] (v2) -- (H2);
\draw[bundle_edge] (vt) -- (Ht);
\draw[bundle_edge] (C1) -- (B1);
\draw[bundle_edge] (Cn) -- (Bn);

\draw[bundle_edge] (C1) -- (v1);
\draw[bundle_edge] (C1) -- (v2);
\draw[bundle_edge] (C1) -- (vt);

\draw[bundle_edge] (Cn) -- (v1);
\draw[bundle_edge] (Cn) -- (v2);
\draw[bundle_edge] (Cn) -- (vt);

\end{tikzpicture}
\caption{The graph $G$ constructed from the instance $I$ in the proof of Theorem \ref{split}. A double edge between a vertex $v$ and a set $S$ denotes that $v$ is adjacent to every vertex in $S$. A double edge between a set $A$ and a set $B$ denotes that every vertex in $A$ is adjacent to every vertex in $B$.}
\label{fig:split}
\end{figure}

 Observe that $V_1 = X \cup ( \cup_{i=1}^n B_i) $ induces a clique and $V_2 = V(G) \setminus V_1$ induces an independent set. Therefore $G$ is a split graph.
It remains to show that the instance $I'$ is equivalent to the instance $I$. Let $I$ be a yes-instance. So there exist a mapping $\gamma : [n] \rightarrow [t]$ such that $\sum_{i \in \gamma^{-1}(j)} a_i \leq C$ for every $j \in [t]$. Construct an edge set $F$ as follows:
 \begin{enumerate}[label=(\roman*)]
     \item $G[X] \subseteq F$
     \item Add all the edges between $A_i$ and $v_j$ in $F$ whenever $j \neq \gamma(i)$. 
     \item  Add all edges between $B_i$ and $v_j$ in $F$ whenever $j \neq \gamma(i)$.
     \item Add all edges between $B_i$ and $B_j$ whenever $\gamma(i) \neq \gamma(j)$.    
 \end{enumerate}
 Clearly, 
    $$|F| \leq \binom{t}{2} + \alpha a (t-1) + 2n(t-1) + 4 \binom{n}{2} = k$$
    Every component of $G\setminus F$ contains exactly one vertex of $X$. If $C_j$ is the component containing $v_j$, then 
    \begin{align*}
        |C_j| &= 1 + |P_j| +\sum_{i \in \gamma ^{-1}(j)}(|A_i| +|B_i|) \\
        &=1+ h' +\sum_{i \in \gamma ^{-1}(j)} (\alpha a_i + 2) \\
        & \leq h-\alpha C -2n +\alpha C +2n \\
        &=h 
 \end{align*}
Thus $F$ is a solution for $I'$.

Conversely, let $F \subseteq E(G)$  be a solution of $I'$.

\begin{claim}
   Each component of $G \setminus F$ contains at most one vertex of $X$. 
\end{claim}
\begin{claimproof}
    Suppose, for contradiction, there exists a component $C$ containing two vertices of $X$, say $v_i$ and $v_j$. Then \begin{align*}
    |C| & \geq  2+ |P_i| + |P_j| -k \\
    &= 2+2h' -k\\
    &=2h-2\alpha C - 4n -k\\
    &= h + h-2\alpha C - 4n -k\\
    &=h + 2 \alpha C +3k \\
    & >h.
\end{align*} 
Thus $E(G[X]) \subseteq F$. Moreover, for every $i \in [n]$, each vertex of $A_i$ and $B_i$ must have at most one $v_j$ as a neighbor in $G \setminus F$. 
This counts at least $\binom{t}{2}+ \alpha a(t-1) + 2n(t-1)$ edges in $F$. So at most $2n(n-1)$ many more edges can be there in $F$. 
\end{claimproof}

\begin{claim}
    For each $i$, both the vertices of $B_i$ will be contained in the same component in $G \setminus F$.
\end{claim}
\begin{claimproof}
    Suppose $B_i= \{b_1, b_2\}$ where $b_1$ and $b_2$ belong to two different components of $G \setminus F$. 
    Then for each vertex $a \in A_i$, either $ab_1 \in F$ or $ab_2 \in F$. 
    This demands another  $\alpha a_i$ many edges in $F$ which is not possible.
\end{claimproof}

\begin{claim}
    Each $B_i$ is connected to some $v_j$ in $G \setminus F$.
\end{claim} 
\begin{claimproof}
    Note that if this is not the case then corresponding to every vertex of $A_i$, either an edge to $B_i$ or an edge to $v_j$ should be deleted, which results in $\alpha a_i$ more edges. 
    This is also not possible since $\alpha a_i > 2n(n-1)$.
\end{claimproof}

Now we can construct a solution for $I$ as follows: For each item $i$, we assign $\gamma(i)=j$ if and only if $B_i$ and $v_j$ are in the same component of $G \setminus F$. If $|\gamma^{-1}(j)|=r$, the minimum number of vertices in the component of $G \setminus F$ containing $v_j$ will be
$$1+|P_j|+ \sum_{i \in \gamma^{-1}(j)} (|B_i|+|A_i|)  - 2n(n-1)$$
Since this is at most $h$, we have:
\begin{alignat*}{3}
&1+ h'+ 2r + \alpha  \sum_{i \in \gamma^{-1}(j)}a_i - 2n(n-1) &\leq & h\\
\implies &  h -(\alpha C+2n) + 2r + \alpha \sum_{i \in \gamma^{-1}(j)} a_i - 2n(n-1) &\leq& h \\
 \implies & -\alpha C-2n + 2r + \alpha \sum_{i \in \gamma^{-1}(j)} a_i - 2n(n-1) &\leq & 0\\
\implies &  \hspace{2.8cm} \alpha (\sum_{i \in \gamma^{-1}(j)} a_i - C) &\leq& 2n^2 - 2r < \alpha \text{ \hspace{4pt} (since } \alpha = n^{100} )\\
 \implies &  \hspace{2.8cm}\sum_{i \in \gamma^{-1}(j)} a_i - C &\leq& 0 \\
 \implies & \hspace{2.8cm} \sum_{i \in \gamma^{-1}(j)} a_i &\leq& C 
\end{alignat*}
So $\gamma$ defines a solution to $I$. Hence, $I$ and $I'$ are equivalent instances. Hence the proof.
\end{proof}
\begin{cor}
    The $\mathcal{T}_{h+1}$-\textnormal{\textsc{Free Edge Deletion}} is NP-complete even when restricted to chordal graphs.
\end{cor}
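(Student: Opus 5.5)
This follows directly from Theorem~\ref{split}, because every split graph is chordal. The plan has two parts: membership in NP, and NP-hardness inherited from the split-graph case.

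For membership in NP, the certificate is an edge set $F\subseteq E(G)$ with $|F|\le k$. We compute the connected components of $G\setminus F$ with BFS or DFS and check that each has at most $h$ vertices. This takes polynomial time, and it works for general graphs, so in particular for chordal ones.

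For hardness, we show that the class of split graphs is contained in the class of chordal graphs. Let $G$ be split with partition $V_1$ (a clique) and $V_2$ (an independent set). Suppose $G$ had an induced cycle $Z$ of length at least $4$.
\begin{itemize}
\item At most two vertices of $Z$ can lie in $V_1$. Any two of them are adjacent in $G$, so they must be consecutive on the induced cycle; three pairwise adjacent vertices cannot all lie on an induced cycle of length at least $4$.
\item Therefore $Z$ has at least two vertices in $V_2$.
\item If two vertices of $V_2$ were consecutive on $Z$, they would be adjacent, contradicting independence. So vertices of $V_2$ on $Z$ alternate with vertices of $V_1$.
\item With at most two vertices of $Z$ in $V_1$, alternation forces $|Z|\le 4$, so $|Z|=4$ with two vertices $a,b\in V_1$ at distance two on the cycle.
\item But $a$ and $b$ are adjacent, since $V_1$ is a clique. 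This chord contradicts that $Z$ is induced.
\end{itemize}
Hence every split graph is chordal.

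Consequently, the reduction of Theorem~\ref{split} from \textsc{Unary Bin Packing}, which is strongly NP-hard, already outputs chordal instances. The problem is therefore NP-hard on chordal graphs. Combined with membership in NP, it is NP-complete. There is no real obstacle here; the only step needing care is the short inclusion argument above, and it is standard.
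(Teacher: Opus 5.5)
Your proposal is correct and matches the paper's intended argument. The paper states this corollary without proof directly after Theorem~\ref{split}, relying on exactly the inclusion of split graphs in chordal graphs, with NP membership carried over from the general problem. Your explicit check that a split graph has no induced cycle of length at least $4$ correctly fills in that standard fact.
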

\begin{thm}
    The $\mathcal{T}_{h+1}$-\textnormal{\textsc{Free Edge Deletion}}  can be solved on split graphs in time ${\mathcal{O}((k+1)^{k+2} n^{1+o(1)})}$, where $k$ is the solution size.
\end{thm}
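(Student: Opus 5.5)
We plan a bounded search tree on the clique side of the split partition, followed by a greedy/polynomial completion step. Fix a split partition $V(G)=V_1\cup V_2$ with $G[V_1]$ a clique and $V_2$ independent; it can be computed in linear time. First observe that a clique on $q$ vertices whose vertices end up in at least two different components of $G\setminus F$ forces at least $q-1$ deleted edges inside the clique. So if $|V_1|>h$ we need at least $|V_1|-h$ edges inside $V_1$. More generally, splitting $V_1$ into parts of sizes $s_1,\dots,s_p$ costs $\sum_{r<q}s_rs_q\ge (p-1)\min_r s_r$ edges. Hence in any solution of size at most $k$, $V_1$ is split into at most $k+1$ parts, and all but one part have size at most $k$ (a second part of size $>k$ would cost more than $k$ edges).

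The algorithm therefore branches on how $V_1$ is split. We maintain a designated ``large'' part and at most $k$ ``small'' parts whose total size is at most $k$. We branch vertex by vertex only on vertices that must leave the large part. The trigger is a component of size $>h$: its clique portion is too large, so some clique vertex must move. For a vertex $v\in V_1$ we branch on at most $k+1$ choices, namely staying in the large part or joining one of the at most $k$ small parts. Each move charges at least one edge to the budget, giving depth at most $k$ and $(k+1)^{k}$ leaves. Attaching $V_2$ vertices multiplies this by a further $O((k+1)^2)$ factor, which accounts for the stated $(k+1)^{k+2}$.

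Once the partition of $V_1$ is fixed, each $v\in V_2$ must be placed with at most one part among those it is adjacent to (or made isolated). Its cost is $\deg(v)$ minus the number of neighbours in the chosen part. The remaining constraint is the capacity $h$ of each part. Since $V_2$ is independent, this is an assignment problem: vertices of $V_2$ go to at most $k+2$ bins (the parts plus ``isolated''), each bin has a capacity, and the goal is to minimize the number of deleted edges. Vertices of $V_2$ adjacent only to the large part matter only through capacity, and only $O(k)$ vertices of $V_2$ can have a non-zero deletion cost inside the budget.

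We solve this leaf subproblem as a min-cost flow (or transportation) problem with $O(k)$ bins. Source edges go to each $V_2$ vertex, each $V_2$ vertex connects to the bins it may join at cost $\deg(v)-|N(v)\cap Q_r|$, and bin $r$ connects to the sink with capacity $h-|Q_r|$. We then accept iff the cost plus the cut inside $V_1$ is at most $k$. Using an almost-linear-time min-cost flow algorithm gives the $n^{1+o(1)}$ factor per leaf, for a total of $O((k+1)^{k+2}n^{1+o(1)})$.

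The main obstacle is justifying the branching rule: we must show that whenever the current configuration is infeasible there is a bounded set of candidate vertices of $V_1$, and at least one of them must move in every solution extending the current branch. I would first try arguing via true-twin symmetry that the vertex to move can be taken to be a clique vertex of minimum $V_2$-degree. Moving clique vertices is interchangeable up to their $V_2$-neighbourhoods, and an exchange argument should show that a lowest-degree vertex is always a valid choice, which keeps the branching width at $k+1$.
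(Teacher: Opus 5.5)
There is a genuine gap: the branching rule is unjustified, and the exchange argument you propose for it is false. Your leaf step is sound. Once a partition $Q_1,\dots,Q_p$ of $V_1$ into intended components is fixed, each $v\in V_2$ is either placed into some $Q_r$ (cost $\deg(v)-|N(v)\cap Q_r|$, capacity $h-|Q_r|$) or isolated (cost $\deg(v)$). That is exactly your min-cost flow, and minimising over partitions gives the optimum.

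The trouble is the rule for choosing which clique vertex moves, which you flag yourself. Clique-side vertices of a split graph are not twins. What matters is how their $V_2$-neighbourhoods overlap, not their degrees. Here is a counterexample:
\begin{itemize}
\item Take the triangle $V_1=\{a,b,c\}$, vertices $y_1,y_2$ adjacent to both $a$ and $c$, pendants $z_1,z_2,z_3$ at $b$, pendants $w_1,w_2$ at $c$, and $h=6$, $k=2$.
\item Keeping $V_1$ whole forces detaching at least $4$ vertices of $V_2$, so it costs at least $4$. Any split separating $a$ from $c$ costs $2$ on the triangle plus at least $2$ for $y_1,y_2$.
\item Hence the unique solution is $F=\{ab,bc\}$, with partition $\{a,c\}\mid\{b\}$.
\item The $V_2$-degrees of $a,b,c$ are $2,3,4$. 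Your rule therefore moves $a$ out first, and then $b$, the minimum-degree vertex of $L=\{b,c\}$. The solution instead needs $c$ to follow $a$ while $b$ stays, so with at most $k=2$ moves this partition is never produced.
\end{itemize}
Your claim that each move charges at least one edge is also false. Moving $v$ from $L$ into a small part $S_i$ changes the cut inside $V_1$ by $|L|-1-|S_i|$, which can be negative. If instead ``stay'' is a genuine branch (with stayed vertices fixed), the search is just enumeration of partitions in a fixed vertex order. Then the min-degree choice is irrelevant, but stay-branches consume no budget, so your depth bound of $k$ no longer holds.

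The missing idea is already in your first paragraph. Splitting a $q$-clique costs at least $q-1$ edges, so $V_1$ can be split at all only when $|V_1|\le k+1$, and no branching rule is needed.
\begin{itemize}
\item If $|V_1|\ge k+2$, the only admissible partition is $\{V_1\}$; reject if $|V_1|>h$. The leaf is then solved by keeping the at most $h-|V_1|$ highest-degree vertices of $V_2$ attached. This is exactly the paper's Case~1.
\item If $|V_1|\le k+1$, enumerate all at most $(k+1)^{k+1}$ partitions of $V_1$ and run your flow on each. This fits the stated running time.
\end{itemize}
The paper handles the second case differently. It notes that $V_1$ is then a vertex cover of size at most $k+1$ and calls the $\mathcal{O}(\ell^{\ell+1}n^{1+o(1)})$ vertex-cover algorithm of Bazgan et al.\ as a black box. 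With this case split added, your partition enumeration plus min-cost flow is a self-contained replacement for that citation.
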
 
\begin{proof}
Let $V(G)=(V_1, V_2)$ be the partition of the vertex set of a given split graph $G$ such that $G[V_1]$ is a clique and $G[V_2]$ is an independent set.\\
\textbf{Case 1: }When $|V_1| > k+1$\\ 
If $G,k,h)$ is a yes-instance, then $V_1$ should be in one single component in the graph obtained after deleting the solution edge set. So we assume this to be the case, and add vertices from $V_2$ greedily (in the decreasing order of degree) into this component till the component size becomes $h$. Count the number of edges between the remaining vertices of $V_2$ and $V_1$. If this number is less than or equal to $k$, return yes-instance. Otherwise, return no-instance. Time taken for this process is $\mathcal{O}(n+m)$.\\
\textbf{Case 2:} When $|V_1| \leq k+1$ \\ 
Observe that $V_1$ is a vertex cover of $G$. So the vertex cover number of $G$ is at most $k+1$. We use the algorithm by Bazgan et al.~\cite{bib11} that solves $\mathcal{T}_{h+1}$-\textnormal{\textsc{Free Edge Deletion}} in  $\mathcal{O}(\ell^{\ell+1} n^{1+o(1)})$ time where $\ell $ is the vertex cover number. Hence the problem can be solved in $\mathcal{O}((k+1)^{k+2} n^{1+o(1)})$ time.
\end{proof}

\section{ \texorpdfstring{$\mathcal{T}_{h+1}$}{T{h+1}}\textnormal{\textsc{-Free Edge Deletion}} on Interval Graphs}\label{sec7}
In this section, we provide FPT algorithm for $\mathcal{T}_{h+1}$-\textnormal{\textsc{Free Edge Deletion}} on interval graphs.

\begin{defn}
    A graph $G $ is called an \emph{interval graph} if there exists a family of intervals $\mathcal{I} = \{I_v : v \in V(G)\}$ on the real line such that for any two distinct vertices $u, v \in V(G)$, $uv \in E(G)$ if and only if $I_u \cap I_v \neq \emptyset$.
\end{defn}
The classical computational complexity of $\mathcal{T}_{h+1}$-\textnormal{\textsc{Free Edge Deletion}}  on interval graphs is currently open. In particular, it is not known whether the problem is NP-hard on this graph class when $h$ is part of the input. Nevertheless, interval graphs admit strong structural properties, most notably the existence of clique path decompositions, which can be exploited algorithmically.

In this section, we show that these structural properties suffice to obtain a fixed-parameter tractable algorithm parameterized by the solution size $k$. Our approach does not rely on resolving the classical complexity of the problem on interval graphs, but instead uses careful reasoning about the interaction between optimal solutions and the clique structure induced by a path decomposition. This yields an FPT algorithm based on structural observations that may be of independent interest.

Let $G$ be an interval graph given with a nice path decomposition $\mathcal{P}= (B_1,\dots, B_\ell)$ such that each $B_i$ induces a clique in $G$. Let $G_i= G[B_1 \cup \dots \cup  B_i]$. 
\begin{lem}\label{lem12}
   Let $(G, k, h)$ be a yes-instance and $F$ be an optimal solution. Let $F_i = F \cap E(G_i)$. Then for any bag $B_i$, at most one component of $G_i \setminus F_i$ intersecting $B_i$ has size strictly greater than $k+1$.
\end{lem}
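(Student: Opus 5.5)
The approach is to argue by contradiction, using the interval representation of $G$ together with a simple counting of deleted edges. Suppose that in $G_i \setminus F_i$ there are two distinct components $C_1, C_2$, both intersecting $B_i$, with $|C_1|,|C_2| \ge k+2$. Since $C_1$ and $C_2$ are different components of $G_i\setminus F_i$, every edge of $G_i$ joining $C_1$ to $C_2$ lies in $F_i \subseteq F$. Because $F$ is a solution of a yes-instance, we have $|F|\le k$. It therefore suffices to exhibit at least $k+1$ edges of $G_i$ between $C_1$ and $C_2$.

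To find these edges, fix an interval model $\{I_v\}$ of $G$ compatible with the clique path decomposition. Then $B_i$ corresponds to a point $p_i$ lying in $I_v$ for every $v\in B_i$. For $j=1,2$ let $J_j := \bigcup_{v\in C_j} I_v$. The key observation is that, since $C_j$ is connected in $G_i\setminus F_i$ and hence in $G_i$, and $G_i$ is the interval graph induced by the intervals of $V(G_i)$, the set $J_j$ is a single interval $[l_j, r_j]$. Since $C_j$ meets $B_i$, we have $p_i\in J_j$, so $l_j\le p_i\le r_j$. Without loss of generality assume $l_1\le l_2$.

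Next we show that every vertex $w \in C_2$ has a neighbour in $C_1$ in $G_i$. Pick any point $x\in I_w\cap[l_2,p_i]$; if $I_w$ does not reach $[l_2,p_i]$, it must lie to the right of $p_i$. The latter case must be excluded using the structure of the decomposition: all vertices of $G_i$ appear in $B_1,\dots,B_i$, so each of their intervals starts at or before $p_i$. With this normalization, $I_w$ always meets $[l_2,p_i]\subseteq[l_1,r_1]=J_1$. Since $J_1$ is covered by the intervals of $C_1$, the point $x$ lies in $I_u$ for some $u\in C_1$, so $uw\in E(G_i)$ and hence $uw \in F$. Distinct $w$ yield distinct edges, which gives at least $|C_2|\ge k+2>k$ edges in $F$, a contradiction.

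The main obstacle is this normalization step: we must make precise that every vertex of $G_i$ has its interval beginning at or before the point $p_i$ representing $B_i$. This follows from choosing the interval model induced by the ordering of the maximal cliques (bags), where each vertex occupies a contiguous range of bags. It is also needed to justify that each $J_j$ is a single interval containing $p_i$. I would set up this correspondence explicitly before running the counting argument. The optimality of $F$ is not even needed; only $|F|\le k$ is used.
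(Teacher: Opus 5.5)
Your proof is correct and is essentially the paper's argument in interval-model language. The paper takes the first bag $B_f$ where both components appear and observes that one of them (your $C_2$, the one with the larger left endpoint $l_2$) has all its vertices introduced in $B_f,\dots,B_i$. It then charges each such vertex to an edge into the other component, which exists because the clique bag introducing that vertex meets the other component. This is exactly your charging of $w$ to a vertex $u\in C_1$ covering the left endpoint of $I_w$.

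Your version also justifies the contiguity the paper uses tacitly ("every node from $f$ to $i$ contains vertices of both"). The normalization you flag is immediate if you set $I_v$ to be the range of indices of the bags of $\mathcal{P}$ containing $v$, and take $p_i=i$. There is no need to pass to an ordering of maximal cliques.
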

\begin{proof}
    For the sake of contradiction, suppose there exists a bag $B_i$ containing vertices from two distinct components $C_1$ and $C_2$ of $G_i \setminus F_i$ such that $|C_1| > k+1$ and $|C_2| > k+1$.
    Let $f$ be the index of the first bag in the decomposition where vertices of both $C_1$ and $C_2$ appear together. 
    This means every node from $f$ to $i$ contains vertices of both $C_1$ and $C_2$, and the first vertex of one of the components, say $C_1$, was introduced at the node $f$. 
    Then $C_1$ must have accumulated all its vertices from the bags in the path from $f$ to $i$. 
    For each node $j \in [f,i]$ where a vertex $v \in C_1$ is introduced, an edge from $v$ to $C_2$ should be added into the solution to keep $C_1$ and $C_2$ disconnected.
    Thus $$ |C_1| \leq |F| \implies k+1 < k$$ a contradiction. Hence the proof.
\end{proof} 

\begin{thm}
      The $\mathcal{T}_{h+1}$-\textnormal{\textsc{Free Edge Deletion}}  can be solved on interval graphs in $\mathcal{O}(k^{\mathcal{O}(k)} n^3)$ time, where $k$ is the solution size.
\end{thm}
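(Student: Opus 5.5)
We give a dynamic program over the clique path decomposition $\mathcal{P}=(B_1,\dots,B_\ell)$, processed left to right. The states are kept small using Lemma~\ref{lem12}.

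\textbf{State.} At node $i$ we store the partial solution $F_i=F\cap E(G_i)$ only through how it looks on $B_i$. By Lemma~\ref{lem12}, among the components of $G_i\setminus F_i$ meeting $B_i$, at most one, the \emph{big} component, has more than $k+1$ vertices. Every other component meeting $B_i$ has at most $k+1$ vertices.

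Each vertex of $B_i$ outside the big component is separated from every vertex of the big component in $B_i$. Since $B_i$ is a clique, this already forces an edge of $F$ for each such pair. So if the big component meets $B_i$, at most $k$ vertices of $B_i$ lie outside it. If the big component does not meet $B_i$, then all components meeting $B_i$ have size at most $k+1$. In that case $|B_i|\le k+\binom{|B_i|}{2}$-type counting shows that either $|B_i|$ is $O(k)$ or all of $B_i$ lies in one component.

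So a state consists of:
\begin{itemize}
\item a set $S\subseteq B_i$ of at most $k$ ``exceptional'' vertices, all other vertices of $B_i$ lying in a single component;
\item a partition of $S$ (plus a marker for which class, if any, joins the big class), with a size in $[1,k+1]$ for each small class;
\item the size of the big component, which can be up to $h$;
\item the number $j\le k$ of edges deleted so far.
\end{itemize}

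\textbf{Counting states.} The number of choices of $S$ is $\binom{|B_i|}{\le k}\le n^k$, which is too many. To fix this, we store $S$ only up to the structure that matters for the future. Vertices of $B_i$ whose intervals end at the same point behave identically from then on, but this alone does not bound the count.

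A better route is to store $S$ explicitly but argue that it is determined by $O(k)$ branching choices along the way. Each vertex enters $S$ either when it is introduced, in which case we branch on where it goes, or it never enters. The introduced vertex is placed into the big class, into an existing small class, or into a new class, which gives $O(k)$ options. Each option charges the deleted edges to its neighbours in $B_i$ outside its class. The charge is at least $1$ whenever the vertex is not in the main class, unless $B_i$ has size $1$.

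Hence along any DP path at most $k$ nontrivial branches occur. The table indexed by (node, partition pattern of $S$ with sizes, $j$) has $k^{O(k)}$ entries per node once $S$ is kept as an ordered list. The big size is handled by storing, for each pattern and $j$, the minimum big-component size, which is a monotone choice.

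\textbf{Transitions and running time.} At an introduce node we branch as above and add the exact number of newly deleted edges, namely the edges from the new vertex to other classes within $B_i$. We reject any state in which $j>k$ or some component exceeds $h$. At a forget node we drop the vertex and discard small classes that vanish from $B_i$; these are finalized and must have size at most $h$. Computing edge counts costs $O(n)$ per transition and there are $O(n)$ nodes. Allowing an extra factor $n$ for recomputing big-class membership counts gives $O(k^{O(k)}n^3)$.

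\textbf{Main obstacle.} Keeping only the minimum big-component size is the step that needs justification. It requires an exchange argument: among partial solutions with the same boundary pattern and the same $j$, the one with the smallest big component dominates. This holds because future vertices interact with the big class only through the clique $B_i$. I expect this argument, and the case where the big component leaves the bags and a new big one is created later, to be the delicate part.
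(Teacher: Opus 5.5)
\textbf{Gap: the state count is never established.} The running-time claim rests entirely on bounding the number of DP states, and the proposal does not do this. You correctly note that indexing by $S\subseteq B_i$ gives up to $n^k$ states, but your repair does not work. Showing that a single DP path makes at most $k$ ``nontrivial branches'' says nothing about how many distinct table entries a node must hold, because different partial solutions produce different sets $S$. If instead you index only by the ``pattern'' of $S$, without vertex identities, the DP is no longer well defined. At a forget node you must know which class the forgotten vertex lies in. Moreover, partial solutions with different $S$ but equal patterns have different futures, so they cannot be merged. The phrase ``once $S$ is kept as an ordered list'' either reintroduces identities, and with them the $n^k$ count, or it does not, and then the transitions break.

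\textbf{The missing observation.} Splitting a clique on $s$ vertices into two nonempty sides deletes at least $s-1$ edges. Hence the following dichotomy holds for every bag.
\begin{itemize}
\item If $|B_i|\ge k+2$, any solution with at most $k$ edges keeps all of $B_i$ in one component. So $S=\emptyset$, and the state is just that component's size, which takes at most $h$ values.
\item If $|B_i|\le k+1$, you can store the partition of $B_i$ explicitly, with vertex identities, in at most $(k+1)^{k+1}$ ways.
\end{itemize}
You wrote exactly this dichotomy for the case where the big component misses $B_i$, but it holds unconditionally.

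Combined with Lemma~\ref{lem12} (at most one part has size exceeding $k+1$), this gives $(k+1)^{O(k)}h$ states per node. With $O(n)$ nodes, $O(k+h)$ work per entry and $h\le n$, the bound $O(k^{O(k)}n^3)$ follows. This is precisely the paper's DP, with states $(i,P,\alpha)$ storing the minimum deletion cost.

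\textbf{The step you flagged is fine.} Keeping only the minimum big-component size, for a fixed boundary partition, fixed small-class sizes and fixed budget $j$, is a valid dominance. It works because $B_i$ separates $V(G_i)\setminus B_i$ from the rest of the graph. Alternatively, you can index by the big size as well and pay a factor $h$, as the paper does. The case of a big component leaving the bags and a new one appearing later is also harmless: when a class is forgotten its component is finalized, and Lemma~\ref{lem12} is applied bag by bag.
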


\begin{proof}
Gaikwad and Maity~\cite{bib2} showed that  $\mathcal{T}_{h+1}$-\textnormal{\textsc{Free Edge Deletion}}  admits a kernel with at most $2kh$ vertices and $2kh^2 +k$ edges. So if  $h \leq k+1$, we have a kernel of at most $2k(k+1)$ vertices and $2k(k+1)^2+k$ edges. By brute force, the problem can be solved in $\mathcal{O}(k^{\mathcal{O}(k)})$ time. Therefore, we now consider the case when $h > k+1$.

 We present a dynamic programming algorithm on the path decomposition $\mathcal{P}$ of $G$. 
We define a DP table entry $D(i,P, \alpha)$ where: 
\begin{itemize}
    \item $i$ is the index of the current node,
    \item $P=\{P_1, P_2,\dots , P_r\}$ is a partition of $B_i$,
    \item $\alpha: P \rightarrow [h]$ is a function that assigns an integer $\alpha(P_j) \in [h]$ to each part $P_j \in P$.
\end{itemize} 
$D(i,P,\alpha)$ stores the minimum size of an edge subset $F_i \subseteq E(G_i)$ such that the connected components of $G_i \setminus F_i$ intersecting $B_i$, denoted $C_1, \dots, C_r$, satisfy $C_j \cap B_i = P_j$ and $|C_j| = \alpha(P_j)$ for all $1 \leq j \leq r$. We calculate the DP table using the following recursive relations: 

 \medskip
 \noindent \textit{First node:}
 
  Since $\mathcal{P}$ is a nice path decomposition, $B_1 = \emptyset$. Hence the only value to store here is $D(1, \emptyset, \emptyset) = 0$. 
  
  \medskip
\noindent \textit{Introduce nodes:} 

Let $B_i=B_{i-1} \cup \{v\}$. Consider the partition $P$ of $B_i$ and let $v \in P_q$. Since $v$ is adjacent to every other vertex in $B_i$, all edges between $v$ and $B_i \setminus P_q$ should be in the solution. Hence, 
$$D(i,P,\alpha)= D(i-1, P',\alpha ' )+|B_i|-|P_q|$$ where $P'$ and $\alpha'$ are as follows:\\
Case 1: $|P_q| >1$.

In this case, $P' $ is obtained from $P$ by replacing $P_q$ with $P_q \setminus \{v\}$. Then $\alpha'$ is such that $\alpha'(P_j)=\alpha(P_j)$, for all $j \neq q$ and $\alpha'(P_q') = \alpha(P_q) -1$. \\
Case 2: $P_q=\{v\}$.

In this case, $P' = P \setminus \{P_q\}$. Here $\alpha'$ is the restriction of $\alpha$ to $P'$. 

 \medskip
\noindent \textit{Forget nodes:}

Let  $B_i=B_{i-1} \setminus \{v\}$, then  $$D(i,P,\alpha)= \min_{(P', \alpha')} D(i-1, P', \alpha')$$ where the minimum is taken over all pairs $(P', \alpha')$ defined by the following two cases:\\
Case 1: 

Here $P'$ is obtained from $P$ by replacing $P_q$ with $P_q \cup \{v\}$ for some $q \in [r]$. In this case, $\alpha'$ be defined by  $\alpha'(P_j')= \alpha(P_j)$ for all $j=1, \dots r$.\\
Case 2:

Here $P'= P \cup \{\{v\}\}$. In this case, $\alpha'$ is defined such that $\alpha$ is the restriction of $\alpha'$ to $P$ and $\alpha'(\{v\})=s$ for any integer $1 \leq s \leq h$.
    
Since $\mathcal{P}$ is a nice decomposition, the last node, say $\ell$, has $B_{\ell} =\emptyset$. The optimum solution will be $D(\ell, \emptyset, \emptyset)$. If this value is at most $k$, it is a yes-instance. Otherwise, return no-instance. 

To get the required complexity, we restrict the possible pairs $(P, \alpha)$ using Lemma \ref{lem12}.
For $|B_i| > k+1$, we only allow the trivial partition $P=\{B_i\}$ (since splitting the clique of size greater than $k+1$ costs more than $k$ edges). So maximum number of values to be stored at such a node is $h$.
If $|B_i| \leq k+1$, the number of possible partitions is at most $(k+1)^{k+1}$. By Lemma \ref{lem12}, at most one part $P_j$ can have $\alpha(P_j) > k+1$. So number of possible functions for a partition $P$ is at most $(k+1)^{k+1}h$. Hence the maximum number of values to be stored at such a node is $(k+1)^{2k+2}h$. 

Since the number of nodes is at most $2n$ and the computation of each value takes time $\mathcal{O}(k+h)$, the total running time is  $\mathcal{O}(k^{\mathcal{O}(k)}n^3)$.
\end{proof}

\section{Conclusion}\label{sec9}

We investigated the parameterized complexity of the $\mathcal{T}_{h+1}$-\textnormal{\textsc{Free Edge Deletion}}  problem, with particular emphasis on the case where the component size bound $h$ is unbounded. Our results provide a detailed understanding of the limits of tractability for this problem under a wide range of parameterizations.

On the hardness side, we showed that the problem is W[1]-hard when parameterized by treedepth or by twin cover, thereby strengthening and unifying earlier hardness results for treewidth, pathwidth, and feedback vertex set. Together with recent work of Bazgan, Nichterlein, and Alferez, these results indicate that several classical structural parameters do not suffice to obtain fixed-parameter tractability without bounding $h$.

On the positive side, we identified parameterizations that restore tractability. We proved that the problem is fixed-parameter tractable when parameterized by the cluster vertex deletion number together with $h$, size of vertex deletion set into a clique and when parameterized by neighborhood diversity together with $h$. In addition, since the problem is W[1]-hard when parameterized by the solution size alone, we presented a fixed-parameter tractable bicriteria approximation algorithm parameterized by $k$. We also proved that a natural generalization of this problem on directed acyclic graphs remain W[2]-hard even on directed acyclic graphs. Finally, we showed that the problem admits fixed-parameter tractable algorithms parameterized by $k$ on split graphs and on interval graphs.

Several natural questions remain open. In particular, it is unknown whether the problem admits fixed-parameter tractable algorithms on broader graph classes such as chordal graphs or planar graphs. Improving the approximation factor of our bicriteria algorithm, as well as obtaining approximation guarantees parameterized by structural parameters, are also interesting directions for future work. Furthermore, it remains open whether the problem is fixed-parameter tractable when parameterized by neighborhood diversity alone, and whether polynomial kernels exist on restricted graph classes such as split graphs or interval graphs.





\bibliography{sn-bibliography}

@InProceedings{10.1007/978-3-540-77120-3_79,
author="Guo, Jiong",
editor="Tokuyama, Takeshi",
title="Problem Kernels for NP-Complete Edge Deletion Problems: Split and Related Graphs",
booktitle="Algorithms and Computation",
year="2007",
publisher="Springer Berlin Heidelberg",
address="Berlin, Heidelberg",
pages="915--926",
isbn="978-3-540-77120-3"
}

@article{1e396962aa954ff5bbccabe84d44e71c,
title = "A Survey of Component Order Connectivity Models of Graph Theoretic Networks",
author = "Daniel Gross and Monika Heinig and Lakshmi Iswara and Kazmierczak, \{L. William\} and Kristi Luttrell and John Saccoman and Charles Suffel",
year = "2013",
month = sep,
language = "American English",
volume = "12",
pages = "895--910",
journal = "WSEAS Transactions on Mathematics",
number = "9",
}

@article{drange2016vertex,
  author  = {Drange, Pål Grønås and Dregi, Markus and van ’t Hof, Pim},
  title   = {On the Computational Complexity of Vertex Integrity and Component Order Connectivity},
  journal = {Algorithmica},
  volume  = {76},
  number  = {4},
  pages   = {1181--1202},
  year    = {2016},
  doi     = {10.1007/s00453-016-0127-x}
}

@article{CAI1996171,
title = {Fixed-parameter tractability of graph modification problems for hereditary properties},
journal = {Information Processing Letters},
volume = {58},
number = {4},
pages = {171-176},
year = {1996},
issn = {0020-0190},
doi = {https://doi.org/10.1016/0020-0190(96)00050-6},
url = {https://www.sciencedirect.com/science/article/pii/0020019096000506},
author = {Leizhen Cai}
}

@article{Gross2013ComponentOrder,
  author  = {Gross, Donald and Heinig, Mark and Iswara, Lakshmanan and Kazmierczak, Lisa W. and Luttrell, Kevin and Saccoman, John T. and Suffel, Cliff},
  title   = {A survey of component order connectivity models of graph theoretic networks},
  journal = {SIAM Journal on Applied Mathematics},
  volume  = {12},
  number  = {9},
  pages   = {1--26},
  year    = {2013}
}

@InProceedings{10.1007/978-3-642-20877-5_30,
author="Li, Angsheng
and Tang, Linqing",
editor="Ogihara, Mitsunori
and Tarui, Jun",
title="The Complexity and Approximability of Minimum Contamination Problems",
booktitle="Theory and Applications of Models of Computation",
year="2011",
publisher="Springer Berlin Heidelberg",
address="Berlin, Heidelberg",
pages="298--307",
isbn="978-3-642-20877-5"
}

@article{bib1,
  author		= "{Enright, Jessica} and {Meeks, Kitty}",
  title			= "Deleting Edges to Restrict the Size of an Epidemic: A New Application for Treewidth",
  journal		= "Algorithmica",
  volume		= "80",
  pages			= "1857--1889",
  year			= "2018",
  doi        = "10.1007/s00453-017-0311-7"
}

@article{bib2,
title = {Further parameterized algorithms for the F-free edge deletion problem},
journal = {Theoretical Computer Science},
volume = {933},
pages = {125-137},
year = {2022},
issn = {0304-3975},
doi = {https://doi.org/10.1016/j.tcs.2022.08.025},
url = {https://www.sciencedirect.com/science/article/pii/S0304397522005205},
author = {Ajinkya Gaikwad and Soumen Maity},
}

@InProceedings{bib4,
author="Fellows, Michael R.
and Lokshtanov, Daniel
and Misra, Neeldhara
and Rosamond, Frances A.
and Saurabh, Saket",
editor="Hong, Seok-Hee
and Nagamochi, Hiroshi
and Fukunaga, Takuro",
title="Graph Layout Problems Parameterized by Vertex Cover",
booktitle="Algorithms and Computation",
year="2008",
publisher="Springer Berlin Heidelberg",
address="Berlin, Heidelberg",
pages="294--305",
isbn="978-3-540-92182-0",
doi= "10.1007/978-3-540-92182-0_28"
}

@article{bib5,
title = {Bin packing with fixed number of bins revisited},
journal = {Journal of Computer and System Sciences},
volume = {79},
number = {1},
pages = {39-49},
year = {2013},
issn = {0022-0000},
doi = {https://doi.org/10.1016/j.jcss.2012.04.004},
url = {https://www.sciencedirect.com/science/article/pii/S0022000012000943},
author = {Klaus Jansen and Stefan Kratsch and Dániel Marx and Ildikó Schlotter},
}

@article{bib8,
  title={Minimum bisection is fixed parameter tractable},
  author={Marek Cygan and Daniel Lokshtanov and Marcin Pilipczuk and Michał Pilipczuk and Saket Saurabh},
  journal={Proceedings of the forty-sixth annual ACM symposium on Theory of computing},
  year={2013},
  url={https://api.semanticscholar.org/CorpusID:675985}
}

@article{bib9,
author = {Boral, Anudhyan and Cygan, Marek and Kociumaka, Tomasz
 and Pilipczuk, Marcin},
year = {2016},
pages = {357-376},
title = {A Fast Branching Algorithm for Cluster Vertex Deletion},
volume = {58},
journal = {Theory of Computing Systems},
doi = {10.1007/s00224-015-9631-7}
}

@article{Neil,
	Author = {Neil Robertson and P.D Seymour},
	Journal = {Journal of Combinatorial Theory, Series B},
	Number = {1},
	Pages = {49 - 64},
	Title = {Graph minors. III. Planar tree-width},
	Volume = {36},
	Year = {1984}}

@ARTICLE{2013arXiv1306.3181C,
       author = {{Cao}, Yixin},
        title = "{An Efficient Branching Algorithm for Interval Completion}",
      journal = {arXiv e-prints},
         year = 2013,
        month = jun,
          eid = {arXiv:1306.3181},
        pages = {arXiv:1306.3181},
          doi = {10.48550/arXiv.1306.3181},
archivePrefix = {arXiv},
       eprint = {1306.3181},
 primaryClass = {cs.DS},
       adsurl = {https://ui.adsabs.harvard.edu/abs/2013arXiv1306.3181C}
}

@misc{lokshtanov2017parameterizedintegerquadraticprogramming,
      title={Parameterized Integer Quadratic Programming: Variables and Coefficients}, 
      author={Daniel Lokshtanov},
      year={2017},
      eprint={1511.00310},
      archivePrefix={arXiv},
      primaryClass={cs.DS},
      url={https://arxiv.org/abs/1511.00310}, 
}

@article{CRESPELLE2023100556,
title = {A survey of parameterized algorithms and the complexity of edge modification},
journal = {Computer Science Review},
volume = {48},
pages = {100556},
year = {2023},
issn = {1574-0137},
doi = {https://doi.org/10.1016/j.cosrev.2023.100556},
url = {https://www.sciencedirect.com/science/article/pii/S1574013723000230},
author = {Christophe Crespelle and Pål Grønås Drange and Fedor V. Fomin and Petr Golovach}
}

@misc{gaikwad2025parameterizedalgorithmseditinguniform,
      title={Parameterized Algorithms for Editing to Uniform Cluster Graph}, 
      author={Ajinkya Gaikwad and Hitendra Kumar and Soumen Maity},
      year={2025},
      eprint={2404.10023},
      archivePrefix={arXiv},
      primaryClass={cs.DS},
      url={https://arxiv.org/abs/2404.10023}, 
}

@InProceedings{misra_et_al:LIPIcs.ISAAC.2023.53,
  author =	{Misra, Neeldhara and Mittal, Harshil and Saurabh, Saket and Thakkar, Dhara},
  title =	{{On the Complexity of the Eigenvalue Deletion Problem}},
  booktitle =	{34th International Symposium on Algorithms and Computation (ISAAC 2023)},
  pages =	{53:1--53:17},
  series =	{Leibniz International Proceedings in Informatics (LIPIcs)},
  ISBN =	{978-3-95977-289-1},
  ISSN =	{1868-8969},
  year =	{2023},
  volume =	{283},
  editor =	{Iwata, Satoru and Kakimura, Naonori},
  publisher =	{Schloss Dagstuhl -- Leibniz-Zentrum f{\"u}r Informatik},
  address =	{Dagstuhl, Germany},
  URL =		{https://drops.dagstuhl.de/entities/document/10.4230/LIPIcs.ISAAC.2023.53},
  URN =		{urn:nbn:de:0030-drops-193555},
  doi =		{10.4230/LIPIcs.ISAAC.2023.53}
}

@book{marekcygan,
  author    = {Marek Cygan and
               Fedor V. Fomin and
               Lukasz Kowalik and
               Daniel Lokshtanov and
               D{\'{a}}niel Marx and
               Marcin Pilipczuk and
               Michal Pilipczuk and
               Saket Saurabh},
  title     = {Parameterized Algorithms},
  publisher = {Springer},
  year      = {2015}
 }

@article{CaoMarx2016ChordalEditing,
  author  = {Cao, Yixin and Marx, D{\'a}niel},
  title   = {Chordal Editing is Fixed-Parameter Tractable},
  journal = {Algorithmica},
  volume  = {75},
  number  = {1},
  pages   = {118--137},
  year    = {2016},
  doi     = {10.1007/s00453-015-0014-x}
}

@article{MATHIESON20103181,
title = {The parameterized complexity of editing graphs for bounded degeneracy},
journal = {Theoretical Computer Science},
volume = {411},
number = {34},
pages = {3181-3187},
year = {2010},
issn = {0304-3975},
doi = {https://doi.org/10.1016/j.tcs.2010.05.015},
url = {https://www.sciencedirect.com/science/article/pii/S0304397510002963},
author = {Luke Mathieson}
}

@misc{gaikwad2025parameterizedcomplexitysclubcluster,
      title={Parameterized Complexity of s-Club Cluster Edge Deletion: When Is the Diameter Bound Necessary?}, 
      author={Ajinkya Gaikwad},
      year={2025},
      eprint={2510.07065},
      archivePrefix={arXiv},
      primaryClass={cs.DM},
      url={https://arxiv.org/abs/2510.07065}, 
}

@article{Lampis,
title={Algorithmic Meta-theorems for Restrictions of Treewidth},
author={Lampis, M.},
journal={Algorithmica},
year={2012},
volume={64},
pages={19-37}
}

@book{13638,
	author = {Diestel, Reinhard},
	title = {Graph theory /},
	publisher = {Springer Nature,},
	year = {2017.},
	series = {Graduate Texts in Mathematics; 173},
	address = {Germany :},
	edition = {5th ed.}
}

@book{Downey,
 author = {Downey, Rodney G. and Fellows, M. R.},
 title = {Parameterized Complexity},
 year = {2012},
 publisher = {Springer},
}

@article{defmodwidth,
title = {Modular decomposition and transitive orientation},
journal = {Discrete Mathematics},
volume = {201},
number = {1},
pages = {189-241},
year = {1999},
issn = {0012-365X},
doi = {https://doi.org/10.1016/S0012-365X(98)00319-7},
url = {https://www.sciencedirect.com/science/article/pii/S0012365X98003197},
author = {Ross M. McConnell and Jeremy P. Spinrad}
}

@inproceedings{bib10,
author = {Tedder, Marc and Corneil, Derek and Habib, Michel and Paul, Christophe},
title = {Simpler Linear-Time Modular Decomposition Via Recursive Factorizing Permutations},
year = {2008},
isbn = {3540705740},
publisher = {Springer-Verlag},
address = {Berlin, Heidelberg},
doi = {10.1007/978-3-540-70575-8_52},
pages = {634–645},
numpages = {12},
location = {Reykjavik, Iceland},
series = {ICALP '08},
}

@article{bib11,
author = {Bazgan, Cristina and Nichterlein, Andr\'{e} and Vazquez Alferez, Sofia},
title = {Destroying densest subgraphs is hard},
year = {2025},
issue_date = {Aug 2025},
publisher = {Academic Press, Inc.},
address = {USA},
volume = {151},
number = {C},
issn = {0022-0000},
url = {https://doi.org/10.1016/j.jcss.2025.103635},
doi = {10.1016/j.jcss.2025.103635},
journal = {J. Comput. Syst. Sci.},
month = aug,
numpages = {22}
}

@article{bib12,
title = {Tree-depth, subgraph coloring and homomorphism bounds},
journal = {European Journal of Combinatorics},
volume = {27},
number = {6},
pages = {1022-1041},
year = {2006},
issn = {0195-6698},
doi = {https://doi.org/10.1016/j.ejc.2005.01.010},
url = {https://www.sciencedirect.com/science/article/pii/S0195669805000570},
author = {Jaroslav Nešetřil and Patrice {Ossona de Mendez}}
}

@article{bib13,
author = {Ganian, Robert},
year = {2015},
month = {09},
pages = {77-100},
title = {Improving Vertex Cover as a Graph Parameter},
volume = {Vol. 17 no.2},
journal = {Discrete Mathematics \& Theoretical Computer Science},
doi = {10.46298/dmtcs.2136}
}

@article{bib14,
title = {Upper bounds to the clique width of graphs},
journal = {Discrete Applied Mathematics},
volume = {101},
number = {1},
pages = {77-114},
year = {2000},
issn = {0166-218X},
doi = {https://doi.org/10.1016/S0166-218X(99)00184-5},
url = {https://www.sciencedirect.com/science/article/pii/S0166218X99001845},
author = {Bruno Courcelle and Stephan Olariu}
}

\end{document}